\newtheorem{theorem}{Theorem}[section]
\newtheorem{corollary}{Corollary}[theorem]
\newtheorem{definition}{Definition}[section]
\newcommand\bmaoged{\textsc{a$^*$-bmao}\xspace}
\newcommand\fori{\textsc{fori}\xspace}
\newcommand\forilp{\textsc{fori-lp}\xspace}
\newcommand\forithr{\textsc{fori-thr}\xspace}
\newcommand\bm{\textsc{bm}\xspace}
\newcommand\ls{\textsc{ls}\xspace}
\newcommand\forisim{\textsc{fori-sim}\xspace}
\newcommand\tlr{\textsc{tlr}\xspace}
\newcommand\aids{\textsc{aids}\xspace}
\newcommand\muta{\textsc{muta}\xspace}
\newcommand\protein{\textsc{prot}\xspace}
\definecolor{lblA}{RGB}{51,102,204}   
\definecolor{lblB}{RGB}{240,140,40}   
\definecolor{ins}{RGB}{16,158,66}     
\definecolor{del}{RGB}{200,38,38}     
\definecolor{delred}{RGB}{206,44,44}    
\definecolor{delgrey}{RGB}{150,150,150}    
\definecolor{circred}{RGB}{206,44,44}   
\definecolor{circgrey}{RGB}{150,150,150} 
\tikzset{
  v/.style={circle,draw=black,inner sep=1.8pt,minimum size=17pt},
  va/.style={v,fill=lblA}, 
  vb/.style={v,fill=lblB}, 
  e/.style={line width=.9pt},
  map/.style={dotted,->,>=Latex,draw=black!55,line width=.9pt},
  insE/.style={delred,dashed,line width=1.4pt},
  delE/.style={delgrey, dashed, line width=1.2pt}, 
  delCircle/.style={circgrey, dashed, line width=1.2pt}, 
  insCircle/.style={circred, dashed, line width=1.2pt}, 
  insCirclelarge/.style={circred, dashed, line width=1.7pt}, 
  arr/.style={-Latex, line width=1.4pt}, 
  textlab/.style={font=\sffamily\Large}
}
\def\BibTeX{{\rm B\kern-.05em{\sc i\kern-.025em b}\kern-.08em
    T\kern-.1667em\lower.7ex\hbox{E}\kern-.125emX}}
\begin{document}

\title{Accelerating Graph Similarity Search through Integer Linear Programming
\thanks{This work was supported by the DFG under grant FOR-5361 – 459420781, by the BMBF (Germany) and state of NRW as part of the Lamarr-Institute, LAMARR22B, and partially funded by the European Union–NextGenerationEU under the Italian Ministry of University and Research (MUR) National Innovation Ecosystem grant ECS00000041–VITALITY–CUP E13C22001060006.  }
}

\author{\IEEEauthorblockN{Andrea D’Ascenzo}
\IEEEauthorblockA{\textit{Dept. of Computer Science} \\
\textit{Gran Sasso Science Institute}\\
L’Aquila, Italy \\
0000-0001-5612-0798}
\and
\IEEEauthorblockN{Julian Meffert}
\IEEEauthorblockA{\textit{Dept. of Computer Science} \\
\textit{University of Bonn}\\
Bonn, Germany \\
0009-0008-9670-4569}
\and
\IEEEauthorblockN{Petra Mutzel}
\IEEEauthorblockA{\textit{Dept. of Computer Science} \\
\textit{University of Bonn}\\
Bonn, Germany \\
0000-0001-7621-971X} 
\and
\IEEEauthorblockN{Fabrizio Rossi}
\IEEEauthorblockA{\textit{DISIM dept.} \\
\textit{University of L’Aquila, Italy}\\
L’Aquila, Italy \\
0000-0002-7495-390X}
}

\maketitle

\begin{abstract}
The {\em Graph Edit Distance} (GED) is an important metric for measuring the similarity between two (labeled) graphs.  It is defined as the minimum cost required to convert one graph into another through a series of (elementary) edit operations. Its effectiveness in assessing the similarity of large graphs is limited by the complexity of its exact calculation, which is NP-hard theoretically and computationally challenging in practice. The latter can be mitigated by switching to the {\em Graph Similarity Search} under GED constraints, which determines whether the edit distance between two graphs is below a given threshold. 

A popular framework for solving Graph Similarity Search under GED constraints in a graph database for a query graph is the {\em filter-and-verification} framework. Filtering discards unpromising graphs, while the verification step certifies the similarity between the filtered graphs and the query graph. To improve the filtering step, we define a lower bound based on an integer linear programming formulation. We prove that this lower bound dominates the effective branch match-based lower bound and can also be computed efficiently. Consequently, we propose a graph similarity search algorithm that uses a hierarchy of lower bound algorithms and solves a novel integer programming formulation that exploits the threshold parameter. An extensive computational experience on a well-assessed test bed shows that our approach significantly outperforms the state-of-the-art algorithm on most of the examined thresholds.
\end{abstract}

\begin{IEEEkeywords}
Graph similarity, graph edit distance, graph verification, integer linear programming
\end{IEEEkeywords}

\section{Introduction}

In graph-based data analysis, \textit{Graph Edit Distance} (GED) has emerged as a robust metric for quantifying the similarity between two labeled graphs.
It is defined as the minimum cost required to convert one graph into another through a series of edit operations - such as node and edge insertions, deletions and relabelling.
This approach allows to capture structural differences between the queried graphs, especially useful in scenarios where exact graph matching is hindered by noise or incomplete data, such as in pattern recognition, bioinformatics, computer vision, and graph databases, to name a few.

Despite its effectiveness, calculating the GED is an NP-hard problem~\cite{ZengTWFZ09}, posing significant challenges for scalability and efficiency when applied to large, complex graphs.
To mitigate these challenges, \textit{Graph Similarity Search under GED} shifts the focus to a more tractable problem: assessing whether the edit distance between two graphs falls below a given threshold.
This approach plays a pivotal role in enabling fast and accurate graph similarity searches, where the objective is to identify graphs within a large dataset that closely resemble a given query graph.

In particular, given a graph database $\mathcal{D}$, a query graph $q$ and a threshold parameter $\tau$, the aim of the graph similarity search problem is to select out of $\mathcal{D}$ a subset $\mathcal{S}$ of graphs whose GED to the query graph $q$ is no more than $\tau$.
The most effective framework to address the GED-based graph similarity search is the so-called \textit{filtering-and-verification}. This framework divides the computational burden in two phases: the filtering phase in which \textit{unpromising} graphs in $\mathcal{D}$ get discarded, and the verification phase in which the remaining graphs are further analysed to asses the GED with respect to the query graph $q$.

\begin{figure}
    \centering
    \includegraphics[width=0.7\linewidth]{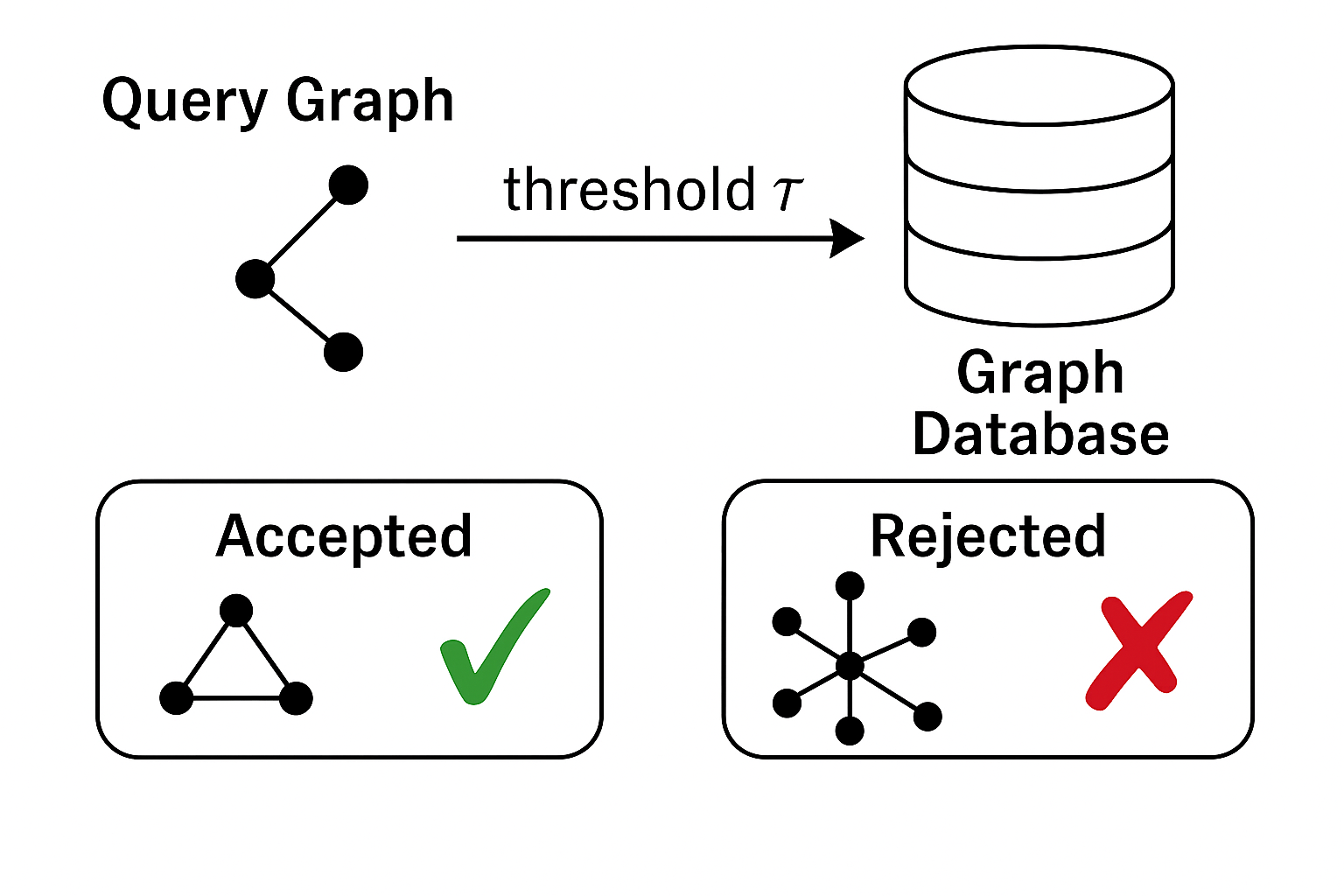}
    \caption{Graph similarity search.}
    \label{fig:verification-intro}
\end{figure}
Common approaches for the first phase are preprocessing techniques or fast GED lower bound computation. Tight and efficiently computable (global) lower bounds on the GED are essential for efficient graph similarity search since they allow to discard all graphs whose lower bound, w.r.t.\ the query graph, exceeds the threshold $\tau$. This greatly impacts the overall time needed to run a similarity search for each query graph, by reducing the number of graphs the exact GED has to be computed for, which is the most time intensive component of graph similarity search. On the other hand, approaches for the second phase mainly focus on speeding up $A^*$-based algorithms or providing stronger integer linear programming (ILP) formulations.
\subsection{Our contribution}
\begin{itemize}
\item We present an ILP-based approach for graph similarity search, that is applicable for general edit cost functions, based on the \fori ILP formulation proposed in~\cite{DAscenzoMMR25}.
\item Efficiency of the graph similarity search is strongly influenced by the quality of bounds. We prove that the linear relaxation of \fori provides a linear programming (LP)-based lower bound that dominates the branch match-based lower bound~\cite{zheng2014efficient}, which is recognized as one of the most effective, establishing a hierarchy on lower bound algorithms from the literature.
\item We provide a class of instances where the difference between the LP-based lower bounds and the branch match-based bounds gets arbitrarily large.
\item We propose an algorithm for graph similarity search that employs a hierarchy of lower bound algorithms before calling an exact GED algorithm based on an ILP formulation specifically tailored for the graph similarity search problem.
\item An extensive computational comparison on a well-assessed test bed demonstrates the practical impact of our theoretical results.
\end{itemize}

\section{Related work}

GED-based graph similarity search has received significant attention across diverse application domains such as cheminformatics, bioinformatics, pattern recognition, and computer vision \cite{liang2017similarity, ChangFYQZ23,ChangF0QZO20,blumenthal2017exact,Abu-AishehRRM15, ChenHHV19, KimC019, wang2012efficient,zhao2013efficient}. Existing works focus on the design of effective index structures, the design of lower bounds to increase the number of graphs that are discarded in the filtering phase, reducing the number of needed exact GED computations and proposing practical algorithms for GED verification or exact GED computation.
Proposed indexing structures include q-gram-based indexes~\cite{zhao2013efficient} - such as k-adjacent tree (k-AT)~\cite{wang2010efficiently}, subgraph-based index~\cite{liang2017similarity}, star structure-based index~\cite{wang2012efficient} and Pars~\cite{ZhaoXLZW18}. Index-based approaches have been shown to offer limited improvement compared to directly running, e.g., tree search-based algorithms~\cite{ChangFYQZ23, ChangF0QZO20}.
Several efficient algorithms for computing lower bounds on the GED have been proposed. These include but are not limited to label set-based lower bounds~\cite{blumenthal2017exact, ZengTWFZ09}, branch match-based lower bounds~\cite{zheng2014efficient, blumenthal2017exact} and linear programming-based lower bounds~\cite{Lerouge2017}. The state-of-the-art lower bound algorithms have been compared in an experimental survey~\cite{BlumenthalBGBB20}.

Despite the NP-hardness of computing the GED several practical algorithms have been developed. $A^*$-based-algorithms employ a best-first search strategy, usually differing by the heuristic used to compute the lower bound guiding the search. The \bmaoged algorithm~\cite{ChangFYQZ23} currently represents the state-of-the-art of tree search-based approaches both for graph similarity search and GED verification, outperforming depth-first approaches such as DF-GED~\cite{blumenthal2017exact, Abu-AishehRRM15}, CSI-GED~\cite{GoudaH16} and Beam-Stack-Search~\cite{ChenHHV19}. Recently Integer Linear Programming has been shown to outperform \bmaoged~\cite{DAscenzoMMR25}.

Recently, several machine learning approaches have emerged for estimating GED values~\cite{riba2021learning,PiaoXSRZC23}, or to be used as a heuristic for $A^*$-based algorithms~\cite{WangZYYY21}. While these methods efficiently approximate GED values, they are inherently heuristic and lack approximation guarantees. Moreover, most approaches predict only the GED value without providing an appropriate edit path, which is often crucial for interpretability and downstream applications.

\section{Preliminaries}\label{sec:preliminaries}
\subsection{Labeled (attributed) graphs and graph similarity}
We follow the notation used in~\cite{ChangFYQZ23}. 
Let $G = (V_G, E_G, l)$ be a simple, undirected, labeled\footnote{Throughout this paper, we use the term \textit{label} to refer to node or edge annotations, which are also commonly known as attributes in other domains.} graph, where $V_G$ is the set of nodes, $E_G$ is the set of edges, and $l: V_G \cup E_G \rightarrow \Sigma_V \cup \Sigma_E$ is a labeling function that assigns to each vertex $u\in V_G$ a label $l(u) \in \Sigma_V$ and to each edge $\{u,v\} \in E_G$ a label $l(u,v) \in \Sigma_E$.

A node $u \in V_G$ is considered a neighbor (or adjacent) to a node $v \in V_G$ if an edge $\{u,v\} \in E_G$ exists. For convenience, we may use the shorthand notation $uv$ in place of $\{u,v\}$. The neighborhood of a node $u$, denoted $\delta_G(u)$, is defined as the set of all nodes adjacent to $u$, i.e., $\delta_G(u) = \{v \in V_G \mid \{u,v\} \in E_G\}$, the set of edges incident to $u$ is denoted as $\Gamma_G(u) = \{\{u,v\} \in E_G \mid v \in V_G\}$.

To accommodate node and edge insertions or deletions, we extend the graph with a dummy node $\varepsilon_V$ and a dummy edge $\varepsilon_E$, resulting in the augmented sets $V_{G+\varepsilon} = V_G \cup \{\varepsilon_V\}$ and $E_{G+\varepsilon} = E_G \cup \{\varepsilon_E\}$. The specific labels and associated cost functions for nodes and edges depend on the dataset and application context.

Although most datasets in the literature consist of undirected graphs, the definitions and notation presented here can be readily adapted to directed graphs. In such cases, a directed edge from node $u$ to node $v$ is denoted by $(u,v)$, with $\delta_G^+(u)$ and $\delta_G^-(u)$ representing the sets of outgoing and incoming neighbors of $u$, respectively.

The \textit{graph edit distance} (GED) between two graphs $G$ and $H$, denoted as $GED(G,H)$, represents the minimum number of edit operations required to transform graph $G$ into graph $H$. These edit operations include: (1) inserting a labeled vertex; (2) deleting a labeled vertex; (3) changing the label of a vertex; (4) inserting a labeled edge; (5) deleting a labeled edge; and (6) modifying the label of an edge.
Each edit operation is associated with a non-negative \textit{edit cost}: $c_V: \Sigma _V \times \Sigma _V \to \mathbb{R}_{\geq 0}$ for node operations, $c_E: \Sigma _E \times \Sigma _E \to \mathbb{R}_{\geq 0}$ for edge operations.
Figure~\ref{fig:GEDexample} shows an example of the graph edit distance. In order to get graph $H$ from  graph $G$, five edit operations are needed: one node and one edge needs to be deleted from $G$ and two edges need to be inserted into $G$ in order to get $H$. Furthermore, a label change of the bottom node is needed. In the case of unit edit costs (i.e., all 1), the $GED(G,H)=5$.

\begin{figure}
\begin{center}
\scalebox{0.55}{  
\begin{tikzpicture}[x=1cm,y=1cm, line cap=round, line join=round]

\coordinate (L) at (0,0);

\coordinate (c1)  at ($(L)+(0,0)$);        
\coordinate (t1)  at ($(c1)+(0,1.8)$);     
\coordinate (r1)  at ($(c1)+(1.8,0)$);    
\coordinate (b1) at ($(c1)+(0,-1.8)$);  
\coordinate (l1) at ($(c1)+(-1.8,0)$); 
\node[va] (C1)  at (c1)  {};
\node[vb] (T1)  at (t1)  {};
\node[vb] (R1)  at (r1)  {};
\node[va] (B1) at (b1) {};
\node[va] (L1) at (l1) {};

\draw[e] (C1)--(T1);
\draw[e] (C1)--(R1);
\draw[e] (C1)--(B1);
\draw[e] (C1)--(L1);

\node[textlab,anchor=south] at ($(B1)-(0,1.5)$) {$G$};

\draw[arr] ($(c1)+(2.5,0)$) -- ($(c1)+(3.5,0)$);

\coordinate (M) at ($(c1)+(6.0,0)$);

\coordinate (c2)  at ($(M)+(0,0)$);
\coordinate (t2)  at ($(M)+(0,1.8)$);
\coordinate (r2)  at ($(M)+(1.8,0)$);
\coordinate (b2) at ($(M)+(0,-1.8)$);
\coordinate (l2) at ($(M)+(-1.8,0)$);

\node[va] (C2)  at (c2)  {};
\node[vb] (T2)  at (t2)  {};
\node[vb] (R2)  at (r2)  {};
\node[vb] (B2) at (b2) {};
\node[va] (L2) at (l2) {};

\draw[e] (C2)--(R2);
\draw[e] (C2)--(T2);
\draw[e] (C2)--(B2);

\draw[delE] (C2)--(L2);
\draw[insE] (T2)--(R2);
\draw[insE] (R2)--(B2);

\draw[delCircle] (L2) circle (0.36);
\draw[insCircle] (B2) circle (0.36);

\draw[arr] ($(M)+(2.5,0)$) -- ($(M)+(3.5,0)$);

\coordinate (R) at ($(M)+(5.5,0)$);

\coordinate (top3) at ($(R)+(0,1.4)$);
\coordinate (left3) at ($(R)+(-1.4,0)$);
\coordinate (right3) at ($(R)+(1.4,0)$);
\coordinate (bot3) at ($(R)+(0,-1.4)$);

\node[vb] (T3) at (top3) {};
\node[va] (L3) at (left3) {};
\node[va] (R3) at (right3) {};
\node[vb] (B3) at (bot3) {};

\draw[e] (L3)--(T3)--(R3)--(B3)--(L3);
\draw[e] (L3)--(R3);

\node[textlab,anchor=south] at ($(B3)-(0,1.5)$) {$H$};

\end{tikzpicture}
}
\caption{The Graph Edit Distance $GED(G,H)$ for $G$ and $H$ for unit edit costs is 5.} 
\label{fig:GEDexample}
\end{center}
\end{figure}
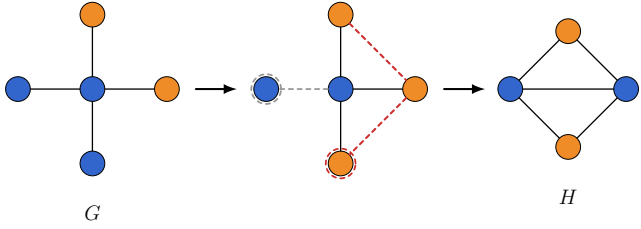

Having introduced the concept of GED, we can formally define the graph similarity search problem as follows.

\begin{definition}[Graph Similarity Search]
    Given a graph database $\mathcal{D}$, a query graph $Q$, and a threshold $\tau$, the graph similarity search asks to find all the graphs $H \in \mathcal{D}$ such that $GED(Q,H) \leq \tau$.
\end{definition}

Since the graph edit distance is employed as the distance metric, the graph similarity search problem studied in this paper is also known as graph edit similarity search~\cite{KimC019}.

\subsection{Integer Linear Programming}
An integer linear program ~\cite{conforti2014integer,wolsey2020integer} is defined as
\begin{equation}\label{eq:milp_problem}
    \min \{ c^T x \mid x \in \mathbb{Z}^n,\ Ax \geq b \}
\end{equation}
and comprises a linear objective function subject to a system of linear inequalities. Here, $A \in \mathbb{Q}^{m \times n}$ denotes the constraint matrix, $c \in \mathbb{Q}^n$ the cost vector, and $b \in \mathbb{Q}^m$ the right-hand side vector. A vector $\bar{x} \in \mathbb{Z}^n$ that satisfies all constraints $a_i^T \bar{x} \ge b_i$ for every $i = 1, \ldots, m$ is referred to as a \emph{feasible solution}. The collection of all feasible solutions to problem~\eqref{eq:milp_problem} is denoted by $\mathcal{X}$. An inequality of the form $\pi x \ge \pi_0$ is considered \textit{valid} for $\mathcal{X}$ if it holds for every $x \in \mathcal{X}$.
In the case $\mathcal{X} = \emptyset$, the ILP formulation is said to be infeasible.

The natural \emph{LP relaxation} of an ILP is obtained by replacing the integrality constraint $x \in \mathbb{Z}^n$ with the continuous constraint $x \in \mathbb{R}^n$. The resulting feasible region, defined as ${\mathcal P} = \{x \in \mathbb{R}^n \mid Ax \ge b\}$, forms a \emph{polyhedron} in $\mathbb{R}^n$, a bounded polyhedron is called a \textit{polytope}. For minimization problems such as GED, the optimal value of the LP relaxation $\nu(\mathcal P_M(I))$ of a formulation $M$ on instance $I$ serves as a lower bound to the optimal value of the corresponding integer program. 
Similarly, the optimal objective value of a linear objective function over a polyhedron $\cal P$ cannot be smaller compared to that over a polyhedron $\cal Q$ with $\cal P \subseteq \cal Q$.
In this work, we further restrict the variable domain to the binary set $\{0,1\}^n$.

An LP has a dual linear program. For the canonical LP (P):
$\min \{ c^T x \mid x \in \mathbb{Z}^n,\ Ax \geq b, x\ge 0 \}$ its dual is given by (D): 
$\max \{ y^T b \mid y^T A \leq c^T, y\ge 0 \}$. Every variable in (P) is associated with a dual constraint and vice versa. Furthermore, every constraint in (P) is associated with a dual variable and vice versa. The strong duality theorem states that if we have a primal feasible solution for a primal LP and a dual feasible solution of the corresponding dual LP, then the objective values of both solutions are the same if and only if both solutions are optimal solutions to (P) and (D).

\section{Methodology}

\subsection{Lower bounds from the literature}

In the following section we introduce the state of the art global lower bounds used in graph similarity search.

Let $L_V(X)$ and $L_E(Y)$ be the multi-set of vertex and resp. edge labels of $X\subseteq V_G$ and $Y\subseteq E_G$.
\begin{definition}[Label set-based lower bound~\cite{blumenthal2017exact}]
The label set-based lower bound (LS) of two graphs G, H is defined as
    $$\text{\ls}(G,H) = \Upsilon(L_V(V_G), L_V(V_H)) + \Upsilon(L_E(E_G), L_E(E_H))$$
where $\Upsilon(S_1,S_2) = \max\{|S_1|, |S_2|\} - |S_1 \cap S_2|$ denotes the edit distance between two multi-sets $S_1,S_2$.
\end{definition}
The \ls lower bound can be interpreted as counting the number of vertex and edge label mismatches between $G$ and $H$. \ls can be computed for graphs with categorical labels and unit edit costs in $O(n\log(n) + m\log(m))$, where $n=max\{|V_G|,|V_H|\}$ and $m=max\{|E_G|,|E_H|\}$ as sorted multi-sets can be intersected in $O(n)$~\cite{ZengTWFZ09} and has been employed in multiple graph similarity search algorithms~\cite{Abu-AishehRRM15, blumenthal2017exact,RiesenFB07}. 

The state-of-the-art algorithm for the graph similarity search with GED constraints \bmaoged uses an optimized version of the branch match-based lower bound, that calculates a matching between the vertices of $G,H$ based on the cost of matching the branch structures of their vertices. The \textit{branch} structure of a vertex $v$ is defined as $B(v) = (L_V(v), L_E(\delta(v))$ where $L_E(\delta(v))$ is the multi-set of labels of v's incident edges. We denote as $\Pi(G,H)$ the set of full mappings of the vertices $V_{G+\varepsilon}$ and $V_{H+\varepsilon}$ and respectively the set of full mappings between the edges of $\Gamma(i)\cup{\varepsilon}$ and $\Gamma(k) \cup \varepsilon$ as $\Pi(\Gamma(i),\Gamma(k))$.

\begin{definition}[Branch match-based lower bound~\cite{zheng2014efficient}]
The branch match-based lower bound \bm of two graphs $G,H$ is defined as 
{\small
    $$ \bm(G,H) = \min_{\pi \in \Pi(G,H)} \displaystyle\sum_{i\in V_G} c_{i,\pi(i)} + \min_{\sigma \in \Pi(\Gamma(i),\Gamma(k))} \sum_{ij \in \Gamma(i)} c_{ij,\sigma(ij)}.$$ 
    }

In the case of uniform edit costs it can be defined as 
{\small    $$ \text{BM}(G,H) = \min_{\pi \in \Pi(G,H)} \displaystyle\sum_{i\in V_G} \delta_{l(i)\neq l(k)} + \frac{1}{2}\Upsilon(L_E(\Gamma(i),L_E(\Gamma(\pi(i))) $$
}
where $\Upsilon(\cdot,\cdot)$ is multiplied with a factor of $\frac{1}{2}$ because the edge $(i,j)\in E_G$ can produce costs both in the matching of $B(i)$ and $B(j)$ and $\delta_{cond}$ is the Dirac delta $\delta_{cond} = 1$ if $cond$ evaluates as true and $\delta_{cond} = 0$ otherwise.
\end{definition}
The \bm lower bound can be computed for general edit costs in $O((|V_G|+|V_H))^3 + |V_G||V_H|\Delta_{\min}^2\Delta_{\max})$~\cite{BlumenthalBGBB20} where 
$$
\begin{cases}
\Delta_{\min} = \min \{ \displaystyle \min_{v \in V_G} |\delta(v)|, \min_{w \in V_H} |\delta(w)|\}\\
\Delta_{\max} = \max \{ \displaystyle \max_{v \in V_G} |\delta(v)|, \max_{w \in V_H} |\delta(w)|\}
\end{cases}
$$

\noindent
For unit edit costs it can be computed in $O(n\log n + m\log m)$, $n=\max\{|V_G|,|V_H|\}$, $m=\max\{|E_G|,|E_H|\}$~\cite{zheng2014efficient}.

Any ILP formulation for the graph edit distance lends itself to obtain a lower bound on the GED by relaxing the integrality constraint on the variables, which has already been suggested by \cite{BlumenthalBGBB20, Lerouge2017}. The optimal solution to a linear programming model can be computed efficiently e.g., using interior point methods in runtime $O(n^3L)$~\cite{Renegar88}, using $L$-bit numbers and $n$ variables, although in practice, e.g., the simplex algorithm is more efficient than in the theoretical worst case~\cite{SpielmanT04}, with an average case runtime of $O(n\log n))$~\cite{Borgwardt82}. In this work, we will use the \fori formulation as the basis of our LP-based approaches.

\subsection{The \fori formulation}
Integer Linear Programming formulations have emerged as an effective and practical approach for solving the GED problem. The \fori formulation has been proven to be the strongest formulation of the known ones.~\cite{DAscenzoMMR25} 

The core idea of the formulation is to orient the edges of the input graphs. In $G$, every undirected edge $\{i,j\}$ is oriented so that $i<j$, and the resulting directed graph is denoted by $\overrightarrow{G}$.
Moreover, in $H$, two arcs $(k,l)$ and $(l,k)$ are introduced for each edge $\{k,l\}\in E_H$ with $k\ne l$, leading to the graph $\overleftrightarrow{H}$.

To model node assignments, a binary variable $x_{i,k}$ is defined for each pair $i \in V_G$, $k \in V_H$, where $x_{i,k} = 1$ indicates that node $i$ is mapped to node $k$, and $x_{i,k} = 0 $ otherwise. 
To model edge assignments, $z$ variables $z_{ij,kl}$ are introduced for the set of all arcs $(i,j)\in E_{\overrightarrow{G}}$ ($i<j$) and all arcs $(k,l)\in E_{\overleftrightarrow{H}}$. A variable $z_{ij,kl}$ is set to 1 if arc $(i,j) \in E_{\overrightarrow{G}}$ gets mapped to arc $(k,l) \in E_{\overleftrightarrow{H}}$ and 0 otherwise.
Finally, the \fori cost function inherently accounts for node and edge deletions and insertions, following the approach in~\cite{Lerouge2017}. The constant term
\[
K = \sum_{i \in V_G} c_{i,\varepsilon} + \sum_{k \in V_H} c_{\varepsilon,k} + \sum_{ij \in E_G} c_{ij,\varepsilon} + \sum_{kl \in E_H} c_{\varepsilon,kl}
\]
represents the cumulative cost of removing and subsequently adding every node and edge, and is incorporated into the objective function. Mapping costs are defined as $\bar{c}_{i,k} = c_{i,k} - c_{i,\varepsilon} - c_{\varepsilon,k}$ for all node pairs $(i,k) \in V_G \times V_H$, and $\bar{c}_{ij,kl} = c_{ij,kl} - c_{ij,\varepsilon} - c_{\varepsilon,kl}$ for all edge pairs $(ij,kl) \in E_G \times E_H$. Consequently, when $x_{i,k} = 1$, the objective function includes the cost of mapping $i$ to $k$ while offsetting the deletion cost of $i \in V_G$ and the insertion cost of $k \in V_H$; the same logic applies to $y_{ij,kl} = 1$ for edge mappings.
The \fori formulation is reported in Figure~\ref{fig:modelFORI}. Constraints~(\ref{constr:fori_first_assignment}) and (\ref{constr:fori_second_assignment})  model the assignment between the nodes, and make sure that a node is not mapped to (resp.\ from) more than one node. 
The constraints~(\ref{constr:fori_first_impl}), (\ref{constr:fori_second_impl}), and (\ref{constr:fori_third_impl}) link the edge mapping variables with the node mapping variables and make sure that edges can be mapped only if the corresponding end nodes have been mapped. Let \forilp denote the natural LP relaxation of \fori obtained by replacing $x,z \in \{0,1\}$ by $x,z\in [0,1]$.

\begin{figure}[t]
\begin{subequations}
\begin{align}
\noalign{\noindent\(\quad\min\displaystyle\sum_{i\in V_G}\sum_{k\in V_H}\bar c_{i,k} x_{i,k}
       + \sum_{(i,j)\in E_{\overrightarrow{G}}}\sum_{(k,l)\in E_{\overleftrightarrow{H}}}\bar c_{ij,kl} z_{ij,kl} + K\)}\notag \\
\text{s.t.} \quad 
    & \sum_{k \in V_H} x_{i,k} \leq 1 \quad \forall \ i \in V_G \label{constr:fori_first_assignment} \\
    & \sum_{i \in V_G} x_{i,k} \leq 1 \quad \forall \ k \in V_H \label{constr:fori_second_assignment} \\
    & \sum_{l \in \delta^+_{\overleftrightarrow{H}}(k)} z_{ij,kl} \leq x_{i,k} \quad \forall \ k \in V_H, \ (i,j) \in E_{\overrightarrow{G}} \label{constr:fori_first_impl} \\
    & \sum_{l \in \delta^-_{{\overleftrightarrow{H}}}(k)} z_{ij,lk} \leq x_{j,k} \quad \forall \ k \in V_H, \ (i,j) \in E_{\overrightarrow{G}} \label{constr:fori_second_impl} \\
    & \sum_{j \in \delta^+_{\overrightarrow{G}}(i)} z_{ij,kl} 
      + \sum _{j \in \delta^-_{\overrightarrow{G}}(i)} z_{ji,lk} \leq x_{i,k} \notag\\
    & \qquad \qquad\qquad\qquad \forall \ i \in V_G, \ (k,l) \in E_{\overleftrightarrow{H}} \label{constr:fori_third_impl} \\
    & x \in \{0,1\}^{|V_G|\cdot |V_H|} \label{fori:x_integrality}\\
    & z \in \{0,1\}^{|E_G|\cdot 2|E_H|} \label{fori:z_integrality}
\end{align}
\end{subequations}
\caption{Formulation \fori~\cite{DAscenzoMMR25}.}
\label{fig:modelFORI}
\end{figure}

Relaxing the integrality constraints in ILP formulations enables the efficient computation of a lower bound on the GED~\cite{BlumenthalBGBB20}. 
Within the filtering-and-verification framework, a strong lower bound can significantly improve efficiency by increasing the number of filtered graphs - if the bound exceeds the predefined threshold, the candidate graph can be discarded without further evaluation - and speeding up the GED verification, e.g., by using it as the heuristic in an $A^*$-based algorithm.

\subsection{Theoretical justification: Strong FORI-based lower bounds}
We begin by motivating the adoption of the \fori formulation within the graph similarity search framework through a first-time analysis of its linear relaxation. In particular, we compare the quality of the lower bound provided by \forilp with that of the state-of-the-art methods. We prove that \forilp dominates the branch match-based lower bound for all instances.

Moreover, we provide an instance (two classes of graphs on $n$ vertices each) for which \forilp provides the optimum solution value of $2n-5$, and the \bm lower bound $n-2$, hence with increasing $n$ the difference gets arbitrarily large.

\begin{theorem}
\label{thm:lb}
Let $G=(V_G,E_G)$ and $H=(V_H,E_H)$ arbitrary labeled graphs together with the unit cost function, then it holds that the lower bound value of \bm is not larger than that of \forilp.
\end{theorem}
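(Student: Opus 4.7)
The plan is to produce, from any feasible solution $(x^*, z^*)$ of \forilp, a fractional witness for the \bm assignment LP whose cost is at most $\nu(\text{\forilp})$. Since that LP --- bipartite matching over $V_{G+\varepsilon}\times V_{H+\varepsilon}$ with star costs $C_{i,k} = \delta_{l(i)\ne l(k)} + \tfrac{1}{2}\Upsilon(L_E(\Gamma(i)),L_E(\Gamma(k)))$ and analogous $\varepsilon$-costs $C_{i,\varepsilon}=1+\tfrac{1}{2}|\Gamma(i)|$, $C_{\varepsilon,k}=1+\tfrac{1}{2}|\Gamma(k)|$ --- is integral, any such witness certifies $\bm \le \nu(\text{\forilp})$. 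I would first rewrite the \forilp objective via the telescoping $\bar c_{i,k}=c_{i,k}-c_{i,\varepsilon}-c_{\varepsilon,k}$ together with $K$, obtaining for the node part $\sum c_{i,k} x^*_{i,k} + \sum c_{i,\varepsilon} x^*_{i,\varepsilon} + \sum c_{\varepsilon,k} x^*_{\varepsilon,k}$ with $x^*_{i,\varepsilon}:=1-\sum_k x^*_{i,k}$ (and similarly for edges), then take $y:=x^*$. This is feasible in the \bm assignment LP by \eqref{constr:fori_first_assignment}--\eqref{constr:fori_second_assignment}, and its node contribution matches the node part of \forilp verbatim, so only the star contribution remains to be bounded.

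The core construction is, for every pair $(i,k)$ with $x^*_{i,k}>0$, the local fractional edge matching
\[
w^{(i,k)}_{\{i,j\},\{k,l\}} \;:=\; \frac{1}{x^*_{i,k}} \cdot \begin{cases} z^*_{ij,kl}, & i<j, \\ z^*_{ji,lk}, & j<i, \end{cases}
\]
which is exactly the share of $z^*$ representing ``edge $\{i,j\}$ at $i$ mapped to edge $\{k,l\}$ at $k$ under $i\mapsto k$''. Constraints \eqref{constr:fori_first_impl}--\eqref{constr:fori_second_impl} give $\sum_{\{k,l\}} w^{(i,k)}_{\{i,j\},\{k,l\}}\le 1$ for each $\{i,j\}\in\Gamma(i)$, and \eqref{constr:fori_third_impl} gives $\sum_{\{i,j\}} w^{(i,k)}_{\{i,j\},\{k,l\}}\le 1$ for each $\{k,l\}\in\Gamma(k)$, so $w^{(i,k)}$ is a feasible fractional matching of $\Gamma(i)$ to $\Gamma(k)$. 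Since $\Upsilon$ is itself the optimum of an integral bipartite-matching LP, this yields $\Upsilon(L_E(\Gamma(i)),L_E(\Gamma(k))) \le \mathrm{cost}(w^{(i,k)})$ under unit edit costs, where the matching cost is evaluated with relabel, deletion, and insertion all at unit price.

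Summing $\tfrac{1}{2}x^*_{i,k}\Upsilon(\Gamma(i),\Gamma(k))$ over $(i,k)\in V_G\times V_H$ together with the star contributions $\tfrac{1}{2}|\Gamma(i)| x^*_{i,\varepsilon}$ and $\tfrac{1}{2}|\Gamma(k)| x^*_{\varepsilon,k}$ for the $\varepsilon$-mappings then yields the desired bound. The decisive identity is that every variable $z^*_{ij,kl}$ appears in \emph{exactly two} local matchings, namely $w^{(i,k)}$ and $w^{(j,l)}$, so the double-counting factor of $2$ exactly cancels the $\tfrac{1}{2}$ in \bm; the residual $|\Gamma(\cdot)|$ terms fold into the edge part of \forilp via $\sum_k x^*_{i,k}+x^*_{i,\varepsilon}=1$ and the analogous edge identity. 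The main obstacle I anticipate is precisely this bookkeeping --- tracking the arc orientations in $\overrightarrow G$ versus $\overleftrightarrow H$ uniformly together with the $\varepsilon$-extensions so that the two accountings line up exactly --- after which the \bm value of $y=x^*$ is at most $\nu(\text{\forilp})$, and the theorem follows.
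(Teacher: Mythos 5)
Your proposal is correct, but it runs the argument in the opposite direction from the paper, and it is genuinely a different proof. The paper starts from the \bm solution: it lifts the node matching $x^*$ and the optimal local branch matchings to fractional edge variables $y^*$, shows $(x^*,y^*)$ satisfies a weakening of (F1) in which the two topological constraints (\ref{F1:ineq5})--(\ref{F1:ineq6}) are replaced by their average $y_{ij,kl}\le\tfrac12(x_{i,k}+x_{j,k}+x_{i,l}+x_{j,l})$, and then chains the polytope inclusions $\mathcal{P}_{\fori}\subseteq\mathcal{P}_{F1}\subseteq\mathcal{P}_{\bm}$, importing the first inclusion from Lemmas 6.4--6.5 of~\cite{DAscenzoMMR25}. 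You instead start from an optimal \forilp point $(x^*,z^*)$ and project it down to a fractional error-correcting assignment witness for \bm: feasibility of the conditional local matchings $w^{(i,k)}=z^*/x^*_{i,k}$ follows from (\ref{constr:fori_first_impl})--(\ref{constr:fori_third_impl}), integrality of the LSAPE polytope gives $\Upsilon(L_E(\Gamma(i)),L_E(\Gamma(k)))\le\mathrm{cost}(w^{(i,k)})$, and the double-counting identity (each $z^*_{ij,kl}$ lives in exactly $w^{(i,k)}$ and $w^{(j,l)}$) cancels the factor $\tfrac12$. I verified the bookkeeping you flag as the main obstacle --- including the subtle point that an edge $\{k,l\}$ of $H$ collects weight through \emph{both} arc orientations within the matchings centred at $k$, so the insertion terms also telescope --- and it closes exactly: your witness cost equals the \forilp objective value of $(x^*,z^*)$. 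Your route buys self-containedness (no detour through (F1) or the external lemmas) and makes explicit the step the paper's containment chain leaves implicit, namely that $\bm$ is bounded above by the cost of \emph{any} feasible fractional branch assignment (via integrality of the assignment polytope), which is exactly what is needed to convert $\mathcal{P}_{F1}\subseteq\mathcal{P}_{\bm}$ into $\bm\le\nu(\mathcal{P}_{F1})$. The paper's route is shorter on the page because it outsources the \fori-versus-(F1) comparison to prior work; yours is the more careful and more directly verifiable of the two.
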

\begin{proof}
    To prove the theorem, we construct from the solution calculated by \bm a solution to a relaxation of the (F1) ILP formulation~\cite{Lerouge2017}, see Figure~\ref{fig:F1}. Let $\mathcal{P}_{F1}$ denote its corresponding LP relaxation polytope. We will see that $\mathcal{P}_{F1}$ is contained in the polytope corresponding to the solutions of the \bm heuristic, denoted as $\mathcal{P}_{\bm}$, i.e. $\mathcal{P}_{F1} \subseteq \mathcal{P}_{\bm}$ and follow that, because \fori's LP relaxation polytope $\mathcal{P}_{\fori}$ is strictly contained in $\mathcal{P}_{F1}$ (Lemmas 6.4 + 6.5~\cite{DAscenzoMMR25}), that  $\bm(G,H) \leq \nu(\mathcal{P}_{\fori}(G,H))$.
    
    Let $x^*$ be the solution calculated by \bm, defined as $x^*_{i,k} = 1$ if node $i$ is matched to node $k$ and $x^*_{i,k} = 0$ otherwise. We now use $x^*$ to formulate edge variables $y^*_{ij,kl}$ that capture the cost of matching the branch structures $B(i)$ and $B(k)$. Therefore, let $\sigma \in \Pi(\Gamma(i),\Gamma(k))$ be an optimum solution to the bipartite edge matching of the branch structures. We introduce variables $y^*_{ij,kl} \in [0,1]$ for each possible edge mapping in $E_{G+\varepsilon} \times E_{H+\varepsilon}$, setting $y^*_{ij,\sigma(ij)} = 0.5$ if $x^*_{i,k} + x^*_{j,k} + x^*_{i,l} + x^*_{j,l} = 1$, $y^*_{ij,\sigma(ij)} = 1$ if the sum is equal to $2$ and $y^*_{ij,kl} = 0$ otherwise. 
    
    We turn now to the ILP formulation (F1). Variables $x_{i,k}, y_{ij,kl}$ capture the node and edge assignments similarly to \fori, constraints (\ref{F1:eq1})-(\ref{F1:eq4}) ensure that every node and edge is either mapped or deleted. The topological constraints (\ref{F1:ineq5})-(\ref{F1:ineq6}) ensure that an edge can only be mapped if both of its endpoints are mapped. This means that in order for the \bm solution $(x^*, y^*)$ to be feasible for (F1) we need to relax the topological constraints, as e.g., $y_{ij,kl} = 0.5$, $ x_{i,k} = 1$, $x_{j,k} = x_{i,l} = x_{j,l} = 0$ would violate constraint (\ref{F1:ineq6}). We replace the topological constraints with the inequalities.
    \begin{equation}
        y_{ij,kl} \leq \frac{1}{2}(x_{i,k} + x_{j,k} + x_{i,l} + x_{j,l})   \label{F1:relax}
    \end{equation}
    As inequality (\ref{F1:relax}) is obtained by summing up the pair of topological constraints for each $(ij,kl) \in E_{G} \times E_{H}$ and dividing both sides by 2, the resulting inequality cannot be stronger than the original constraints. 
    Since $x^*$ corresponds to node matching and $y^*$ to an edge matching, the constraints (\ref{F1:eq1})-(\ref{F1:eq4}) are satisfied, by the above definition the variables $y^*_{ij,kl}$ satisfy the constraint (\ref{F1:relax}) and thus $(x^*, y^*)$ is a solution to the system of (in)equalities (\ref{F1:eq1})-(\ref{F1:eq4}) together with (\ref{F1:relax}). Defining the cost to be $d_{i,k} = \delta_{l(i)\neq l(k)}$ and $d_{ij,kl} = \delta_{l(i,j) \neq l(k,l)}$ we obtain that the objective function value of $(x^*, y^*)$ is equal to $\bm(G,H)$, as the objective function coefficients count the number of vertex label mismatches and each edge label mismatch scaled by a factor of $\frac{1}{2}$ via the right hand side of inequality (\ref{F1:relax}). We denote the polytope corresponding to the relaxed system of inequalities by $\mathcal{P}_{\bm}$. Because we replaced the topological constraints (\ref{F1:ineq5})-(\ref{F1:ineq6}) with their linear combination (\ref{F1:relax}), we obtain that $\mathcal{P}_{F1} \subseteq \mathcal{P}_{\bm}$.
    This also means that $\bm(G,H) \leq \nu(\mathcal{P}_{F1}(G,H))$ and thus Lemma 6.5 in \cite{DAscenzoMMR25} yields that $\bm(G,H) \leq \nu(\mathcal{P}_{\fori}(G,H))$. 
    \end{proof}
    
\begin{figure}
    \begin{subequations}
    \begin{align}
    \noalign{\noindent\(\quad\min\displaystyle\sum_{i\in V_{G+\varepsilon}}\sum_{k\in V_{H+\varepsilon}}d_{i,k}x_{i,k} + \sum_{ij\in E_{+\varepsilon}}\sum_{kl\in E_{H+\varepsilon}}d_{ij,kl}y_{ij,kl}\)}\notag\\[-2mm]
        \text{s.t.\quad}\sum_{k \in V_{H+\varepsilon}}x_{i,k}&=1 \quad \forall \ i\in V_G\label{F1:eq1}\\
    \quad \sum_{i \in V_{G+\varepsilon}}x_{i,k}&=1 \quad \forall \ k\in V_H\label{F1:eq2}\\
    \quad \sum_{kl\in E_{H+\varepsilon}}y_{ij,kl}&=1 \quad \forall \ ij\in E_G\label{F1:eq3}\\
    \quad \sum_{ij\in E_{G+\varepsilon}}y_{ij,kl}&=1 \quad \forall \ kl\in E_H\label{F1:eq4}\\
    \quad y_{ij,kl} \leq x_{i,k} &+ x_{j,k} \quad \forall \ ij \in E_G, \ kl \in E_H\label{F1:ineq5}\\
    \quad y_{ij,kl} \leq x_{i,l} &+ x_{j,l} \text{\space}\quad \forall \ ij \in E_G, \ kl \in E_H\label{F1:ineq6}\\
    x&\in \{0,1\}^{|V_G||V_H|+|V_G|+|V_H|}\\
    y&\in \{0,1\}^{|E_G||E_H|+|E_G|+|E_H|}
    \end{align}
    \end{subequations}
    \caption{ILP Formulation (F1)~\cite{Lerouge2017}.}
    \label{fig:F1}
\end{figure}

Note, that \forilp can be straightforwardly extended to an anchor-aware lower bound, taking partial mappings into account by enforcing $x_{i,k}=1$ based on the already fixed vertices.  

We proceed by providing a class of GED instances for which the difference between the lower bound provided by \forilp and that given by \bm  gets arbitrarily large, which is illustrated in Figure~\ref{fig:thm_lb}.

\begin{theorem}
\label{thm:lb_large}
For the instance $G=(V_G,E_G)$ an unlabeled star ($\mathcal{S}_n$) on $n \ge 3$ vertices and $H=(V_H,E_H)$ an unlabeled cycle ($\mathcal{C}_n$) of $n$ vertices with unit edit costs, the optimal value of $\forilp$ is equal to $2n-5$ and that of \bm is $n-2$.
Thus, for any given number $R$, there exists an instance ($n$ large enough) for which the difference of the two bounds gets larger than $R$.
For $n\rightarrow \infty$, the lower bound provided by \forilp is 2 times larger than that of \bm.
\end{theorem}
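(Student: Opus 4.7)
The plan is to compute each bound separately and then take the difference. Under unit edit costs with trivial labels, \fori's constant evaluates to $K=n+n+(n-1)+n=4n-1$ and every reduced cost collapses to $\bar c_{i,k}=\bar c_{ij,kl}=-2$, so the LP objective rewrites as $4n-1-2\sum_{i,k}x_{i,k}-2\sum_{ij,kl}z_{ij,kl}$. Lower-bounding the LP therefore reduces to upper-bounding $\sum x$ and $\sum z$; the first bound $\sum x\le n$ is immediate from the assignment constraints (\ref{constr:fori_first_assignment})--(\ref{constr:fori_second_assignment}).

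The main obstacle is the bound $\sum z\le 2$, and the key idea is to exploit the asymmetry of the star $\mathcal{S}_n$, in which every edge is incident to the center $c$. I would apply constraint (\ref{constr:fori_third_impl}) at $c$ for each arc $(k,l)\in E_{\overleftrightarrow H}$, obtaining $\sum_{j}z_{cj,kl}\le x_{c,k}$, and sum these $2n$ inequalities. Because $c$ meets every star edge, the left-hand side collects each $z$-variable exactly once, while the right-hand side telescopes to $\sum_{k}2\,x_{c,k}\le 2$ by the out-degree-$2$ regularity of the cycle together with (\ref{constr:fori_first_assignment}). Substituting back yields $\nu(\mathcal{P}_{\fori})\ge 4n-1-2n-4=2n-5$. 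To match this from above, I would exhibit the natural edit path that maps $c$ to some cycle vertex $v$, two leaves to the cycle-neighbors of $v$, and the remaining $n-3$ leaves bijectively to the remaining cycle vertices; this preserves two star edges, deletes $n-3$ of them and inserts $n-2$ cycle edges, for total cost $2n-5$. Since \fori is an exact formulation, sandwiching $2n-5\le\nu(\mathcal{P}_{\fori})\le\fori=\text{GED}\le 2n-5$ forces $\forilp=2n-5$.

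For \bm everything is explicit: the label term vanishes because every vertex carries the same label, and since all edge-label multisets consist of a single symbol, $\Upsilon$ reduces to the absolute degree gap. Independently of the chosen bijection, the center (degree $n-1$) paired with any cycle vertex (degree $2$) contributes $\tfrac{1}{2}(n-3)$, and each of the $n-1$ leaves (degree $1$) paired with a cycle vertex contributes $\tfrac{1}{2}$, so every bijection yields the same cost $\tfrac{1}{2}\bigl((n-3)+(n-1)\bigr)=n-2$, and using the dummy can only worsen this. The difference $(2n-5)-(n-2)=n-3$ therefore diverges and the ratio tends to $2$, giving both claims.
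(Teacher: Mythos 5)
Your proof is correct, and it certifies the LP value by a different mechanism than the paper. For \forilp, the paper exhibits essentially the same integral feasible solution that you encode as an edit path, but then proves optimality by writing down the dual LP explicitly and producing a feasible dual solution ($u_1=6$, $u_j=2$ for $j\ge 2$, $t_{1,kl}=2$, all else zero) with matching objective value, invoking strong duality. You stay entirely on the primal side: after noting that every reduced cost equals $-2$, you bound $\sum x\le n$ via the assignment constraints and $\sum z\le 2$ by aggregating constraint~(\ref{constr:fori_third_impl}) at the star center over all $2n$ arcs of $E_{\overleftrightarrow{H}}$, using that every star edge meets the center and every cycle vertex has out-degree two. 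This aggregation is the paper's dual certificate in disguise (weight $2$ on each constraint~(\ref{constr:fori_first_assignment}), an extra $4$ on the center's, and $2$ on each topological constraint at the center reproduce exactly $u_1=6$, $u_j=2$, $t_{1,kl}=2$), but your presentation avoids deriving the dual program and is more self-contained; what it gives up is the reusable explicit dual object. Two minor points: constraint~(\ref{constr:fori_third_impl}) at the center also carries the term $\sum_{j\in\delta^-(c)}z_{jc,lk}$, which is empty when the center is indexed $1$ and in any case still lets the sum over all arcs collect each $z$-variable exactly once, so nothing breaks; and your \bm count (degree gap $n-3$ at the center, $1$ at each of the $n-1$ leaves, halved) is the correct one --- the paper's prose says ``$n-2$ edges that are deleted'' where it means $n-3$, though its total of $n-2$ agrees with yours --- and your observation that dummy assignments only increase the cost covers a case the paper leaves implicit.
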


\begin{proof} 
In order to show that the optimal value of \forilp is $2n-5$, we will present a feasible LP solution $\bar x$ attaining this value. According to the strong duality theorem, $\bar x$ is an optimal solution of the LP, if and only if there exists a feasible solution to the dual LP, and both solutions have the same objective function value.

Let $1$ be the index of the center vertex of the star graph. We are going to build the primal solution so that node $i$ of the star $\mathcal{S}_n$ is mapped to node $i$ of the cycle $\mathcal{C}_n$, and the two edges $\{1,2\}$ and $\{1,n\}$ of the star are mapped to the edges $\{1,2\}$ and $\{1,n\}$ of the cycle. To do so, we fix $x_{i,i} = 1$ for all $i=1,\dots,n$, and $x_{i,j} = 0$ for all $i\neq j, \ i,j = 1,\dots, n$. Finally, we set $z_{12,12}=z_{1n, 1n}=1$, and all other $z_{ij,kl}$ variables to 0.
Since on this instance the constant $K$ in the objective function of \forilp is equal to $4n-1$, and the edit cost of matching $n$ (unlabeled) nodes and two (unlabeled) edges is $-2$ for each operation, we get an objective value of $(4n-1)-2(n+2)=2n-5$, where $-2(n+2)$ comes from node and edge mappings. Checking the constraints we see that this solution is feasible for \forilp. 

In order to build the dual LP, we bring \forilp into the canonical LP form $\min c^T, Ax\ge b$ by multiplying the constraints with $(-1)$. The resulting dual LP (DF) is shown in Figure~\ref{fig:dual-FORI-LP}. We have dual variables $u\in Q^{|V_G|}$ and $v\in Q^{|V_H|}$ arising from the primal (node mapping) constraints~(\ref{constr:fori_first_assignment}) and (\ref{constr:fori_second_assignment}).
For the three topological constraints~(\ref{constr:fori_first_impl}), (\ref{constr:fori_second_impl}), and (\ref{constr:fori_third_impl}) of \forilp we have the dual variables $r\in Q^{|V_H||E_G|}$, $s\in Q^{|V_H||E_G|}$, and $t\in Q^{|V_G|2|E_H|}$. Consider the following dual solution:  $u_1=6, u_2,\ldots,u_n=2$ and all $v_i=0$ for all $i=1,\ldots,n$. Also all the $r$ and $s$ variables and almost all the $t$ variables are set to 0 with the exception of $t_{1,kl}$, which is set to 2 for all $(k,l) \in E_{\overleftrightarrow{H}}$. Since the right hand side of the constraints~(\ref{constr:dualfori-1}) and (\ref{constr:dualfori-1}) is $(-2)$ (corresponding to the costs of the primal variables), we can observe that all the constraints in the dual LP (see Fig.\ \ref{fig:dual-FORI-LP}) are satisfied. E.g., for $i=1$, constraint~(\ref{constr:dualfori-1}) evaluates to $-6+2+2=-2$, since node $k$ has exactly two outgoing arcs in $E_{\overleftrightarrow{H}}$. This follows from the fact that each undirected edge in $H$ corresponds to two directed copies in the \forilp model, and hence also in its dual. For $i\not=1$, all $r,s$ and $t$ variables are 0, which implies that constraint~(\ref{constr:dualfori-1}) takes the value -2.
It remains to show that the objective value is equal to that of the primal solution $\bar{x}$.
We have $K -\sum_{i \in V_G} v_i = K -6 -2(n-1)=-4 - 2n$ with $K=4n -1$, hence we get an 
objective value of $2n -5$, which by strong duality yields that this is the optimal solution value of \forilp.

Turning to the value of \bm, we observe that every possible node mapping produces the same cost, as for $i=1$ there must be $n-2$ edges that are deleted, which produces a cost of $\frac{1}{2}$ for each deleted edge for a total of $\frac{n-3}{2}$, and for all nodes in the star except for the center node one edge needs to be inserted which produces a cost of $\frac{n-1}{2}$ in total. This leads to a total cost of $\frac{n-3}{2} + \frac{n-1}{2} = n-2$, which yields the result.
\end{proof}

\begin{figure}
\begin{subequations}
\begin{align}
\noalign{ \[ \text{(DF)} \qquad K + \quad \max  \big(-\sum_{i \in V_G} u_i  - \sum_{k \in V_H} v_k \big) \] } \notag \\[-2mm]
\noalign{ \[ \text{s.t.} \quad -u_i - v_k +\sum_{j \in \delta^+_G(i)} r_{ij,k}+ \sum_{j \in \delta^-_G(i)} s_{ji,k} \] }\notag \\[-4mm]
  + \sum_{l \in \delta^+_H(k)} t_{i,kl} &\le \bar{c}_{i,k}  \qquad \forall i \in V_G, k \in V_H \label{constr:dualfori-1}  \\
 -r_{ij,k} - s_{ij,l} - t_{i,kl} &- t_{j,kl} \le \bar{c}_{ij,kl}  \qquad \notag\\
\qquad\qquad\qquad\quad &\forall (i,j) \in E_{\overrightarrow{G}}, (k,l) \in E_{\overleftrightarrow{H}}  \label{constr:dualfori-2}\\
  \qquad u, v, r, s, t &\ge 0  \label{constr:dualfori-3}
\end{align}
    \end{subequations}
    \caption{The dual of the FORI LP-relaxation.}
    \label{fig:dual-FORI-LP}
\end{figure}

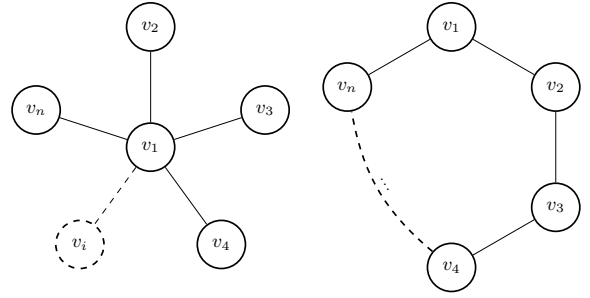
\begin{figure}
\begin{center}
\scalebox{0.8}{  
\begin{tikzpicture}[x=2cm,y=2cm,
  node/.style={circle,draw=black,thick,minimum size=8mm,inner sep=0pt,font=\small},
    nodedashed/.style={circle,draw=black,style=dashed,thick,minimum size=8mm,inner sep=0pt,font=\small},
  every edge/.style={thick}
  ]

\coordinate (CL) at (0,0);
\node[node] (L1) at (CL) {$v_1$};

\foreach \i [count=\j from 2] in {90, 18, 306}{
  \node[node] (L\j) at ({\i}:1) {$v_{\j}$};
  \draw (L1) -- (L\j);
}
 \node[node] (L6) at ({162}:1) {$v_n$};
  \draw (L1) -- (L6);

\node[nodedashed] (L5) at ({234}:1) {$v_i$};
\draw[dashed] (L1) -- (L5);

\begin{scope}[xshift=5cm]
  \coordinate (v1) at (90:1);
  \coordinate (v2) at (30:1);
  \coordinate (v3) at (-30:1);
  \coordinate (v4) at (-90:1);
  \coordinate (vn) at (150:1); 

  \node[node] (N1) at (v1) {$v_1$};
  \node[node] (N2) at (v2) {$v_2$};
  \node[node] (N3) at (v3) {$v_3$};
  \node[node] (N4) at (v4) {$v_4$};
  \node[node] (Nn) at (vn) {$v_n$};

  \draw (N1)--(N2)--(N3)--(N4);
  \draw[dashed,thick,bend left=20] (N4) to node[font=\scriptsize,above,sloped,pos=.5]{\dots} (Nn);
  \draw (Nn)--(N1);
\end{scope}

\end{tikzpicture}
 }
  \caption{Star graph $S_n$ and cycle graph $C_n$ used in Theorem~\ref{thm:lb}.}
  \label{fig:thm_lb}
\end{center}
\end{figure}

From Theorem~\ref{thm:lb} and Lemma 4.3 in~\cite{ChangFYQZ23} we can derive a hierarchy on the branch match-based lower bound \bm, the label set-based lower bound \ls, and the lower bound provided by the linear relaxation of \fori.
\begin{corollary} The following hierarchy on the lower bounds provided by each method holds for all instances:
    $$\forilp \geq \bm \geq \ls.  $$   
\end{corollary}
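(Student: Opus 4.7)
The plan is simple: the corollary reduces to chaining together two pointwise inequalities, one just established in this paper and one imported from the literature. First, I would apply Theorem~\ref{thm:lb} directly, which shows that for arbitrary labeled graphs $G,H$ under unit edit costs, $\bm(G,H) \leq \nu(\mathcal{P}_{\fori}(G,H))$, i.e.\ \forilp dominates \bm on every instance.

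Second, I would invoke Lemma 4.3 of~\cite{ChangFYQZ23}, which already proves that the branch match-based lower bound dominates the label set-based lower bound, namely $\bm(G,H) \geq \ls(G,H)$ on every instance. No further technical content is needed here, since this result is taken as a black box from prior work.

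Finally, I would combine the two inequalities on a common instance $(G,H)$: transitivity of $\geq$ on real numbers gives $\nu(\mathcal{P}_{\fori}(G,H)) \geq \bm(G,H) \geq \ls(G,H)$, which is exactly the claimed hierarchy. Since both inputs hold for every instance, so does the chain. The main point to verify — really the only substantive obstacle — is that the hypotheses of the two cited results align: Theorem~\ref{thm:lb} is stated under unit edit costs, so the corollary should be understood in this regime, and one should check that Lemma 4.3 of~\cite{ChangFYQZ23} is stated in the same setting (which it is). Beyond this consistency check there is no additional argument to make.
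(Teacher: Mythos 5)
Your proposal matches the paper's own argument exactly: the corollary is derived by chaining Theorem~\ref{thm:lb} ($\forilp \geq \bm$) with Lemma~4.3 of~\cite{ChangFYQZ23} ($\bm \geq \ls$) via transitivity, and your remark about the unit-cost setting is a sensible consistency check. Nothing further is needed.
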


 The discussed lower bounds achieve different trade offs between the tightness of the bound and efficiency, prompting the idea to employ a hierarchy of multiple lower bounds with increasing tightness but decreasing efficiency in the filtering phase, exploiting that with small $\tau$ many graphs might successfully get filtered by the simplest and fastest of heuristics, before needing to compute tighter, less efficient lower bounds or even the exact GED.

\subsection{FORI-based GED Verification}
If an input graph passes the filtering stage - meaning that its GED lower bound with respect to the query graph is below the threshold - it proceeds to the verification phase.
In this phase, the exact GED should be computed to determine whether the actual distance lies below or above the threshold.
In what follows we adapt the \fori formulation, that was shown in~\cite{DAscenzoMMR25} to outperform the state-of-the-art method \bmaoged in computing the exact GED, to address the GED verification task.

Specifically, we introduce a threshold constraint that enforces the objective function to remain below the threshold $\tau$.
This modification enables the ILP solver to terminate once it finds any feasible solution or stop the computation once it is determined that the model is infeasible.
This approach is advantageous because it eliminates the need to compute the exact GED value; it suffices to identify any solution with objective value below the threshold.

Constraint (\ref{constr:fori_flat}) is our novel threshold constraint that limits the objective function of \fori, the optimal value of which is equal to the GED, to stay under the specified threshold $\tau$, lest otherwise the model is determined infeasible.

\begin{equation}
    \sum_{i\in V_G}\sum_{k\in V_H}\bar c_{i,k}\cdot x_{i,k}
      + \sum_{(i,j)\in E_{\overrightarrow{G}}}\sum_{(k,l)\in E_{\overleftrightarrow{H}}}\bar c_{ij,kl}\cdot z_{ij,kl} + K \leq \tau. \label{constr:fori_flat}
\end{equation}

We denote the model obtained by extending \fori with constraint~(\ref{constr:fori_flat}) as \forithr.
Algorithm~\ref{algo:foriflat}, referred to as \forisim, outlines our approach to the graph similarity problem. Given a query graph $Q$, a graph dataset $\mathcal{D}$, and a similarity threshold $\tau$, the algorithm iterates over each graph $H \in \mathcal{D}$ and initially computes $\ls(G,H)$, of the lower bound is below the threshold $\tau$ it repeats the process with the $\bm$ lower bound and finally \forilp. If the lower bound provided by any of these algorithms is larger than $\tau$ the graph $H$ can be safely discarded.
Furthermore, the inclusion of the new inequality ensures that once a feasible solution of \forithr is identified, the graph $H$ can be accepted and added to the set $\mathcal{A}$ of accepted graphs (see Line~\ref{code:accepted} in Algorithm~\ref{algo:foriflat}).
Conversely, if the ILP is found to be infeasible, the graph $H$ is excluded from further consideration.

The next section presents a comparative analysis between \forisim and the state-of-the-art algorithms for graph similarity search.

\begin{algorithm}[t]
    \SetAlgoLined\small
	\KwIn{Query graph $Q=(V_Q,E_Q, L_V, L_E)$, graph dataset $\mathcal{D}$, similarity threshold $\tau$.}
	\KwOut{Set $\mathcal{A} \subseteq \mathcal{D}$ of graph $H$ having $GED(Q,H) \leq \tau$.}
	\ForEach{$H \in \mathcal{D}$}{
        \ForEach{\textsc{alg} $\in \{\ls, \bm, \forilp\}$}{
            $lb \gets $ \textsc{alg}$(Q,H)$\;
            \If{$lb > \tau$}{
                \textbf{goto} \ref{code:exit}\;
            }
        }
        Run $\forithr(Q,H,\tau)$\;
        \If{$\forithr(Q,H,\tau)$ is feasible}{
            $\mathcal{A} \gets \mathcal{A} \ \cup \ H $\label{code:accepted}\;
        }
        \Else{
        Discard $H$\label{code:exit}\;
        }
        }
    \caption{Algorithm \forisim.} 
    \label{algo:foriflat}
\end{algorithm}%

\section{Experimental analysis}
We evaluate the presented approaches with respect to the following research questions:
\begin{itemize}
    \item \textbf{Q1}: Will \forilp yield better lower bounds than the state-of-the-art algorithms on real-world graph topologies?
    \item  \textbf{Q2}: How does our algorithm \forisim compare to \bmaoged in graph similarity search on unit edit costs?
    \item \textbf{Q3}: How efficient is our algorithm \forisim in the graph similarity search on non-uniform edit cost functions?
\end{itemize}

\noindent\textbf{Datasets.} We perform experiments on three datasets: \aids, Mutagenicity (referred as \muta from now on), and Protein (shortened as \protein from now on), collected in the IAM Graph Database repository~\cite{RiesenB08}, which include graphs with both node and edge labels; 
the \aids and \muta datasets represent molecular structures, where nodes are labeled with one of 13 chemical symbols and edges indicate valence values of 1, 2, or 3.
The graphs in the \protein dataset represent proteins annotated with their corresponding EC classes~\cite{SchomburgCEGHHS04}. Each node is labeled by a tuple $(t, s)$, where $t$ denotes the structural type (helix, sheet, or loop), and $s$ encodes the amino acid sequence. The dataset includes 8204 distinct protein sequences. Edges between nodes capture structural and/or sequential relationships and are labeled with tuples $(t_1, t_2)$, where $t_1$ and $t_2$ specify the types of the first and second connections between nodes $u_i$ and $u_j$, respectively; $t_2$ may be \texttt{null}. In total, five unique edge types are observed across the dataset.

Graph files are obtained from the \textit{GEDLIB} library by Blumenthal et al.~\cite{BlumenthalBGBB20}.
Table~\ref{tab:dataset_stats} summarizes datasets statistics. $|\mathcal{D}|$ gives the number of graphs in the dataset. $|V|$ and $|E|$ denote the number of nodes and edges per graph, respectively. $|\Sigma_V|$ and $|\Sigma_E|$ represent the sizes of the node and edge label alphabets.

\begin{table}[h]
\caption{Dataset statistics.}
\label{tab:dataset_stats}
\centering
\footnotesize
\setlength{\tabcolsep}{4pt}
\renewcommand{\arraystretch}{1.1}
\begin{tabular}{l|r|r|r|r|r|r|r}
\textbf{Dataset} & $|\mathcal{D}|$ & avg $|V|$ & max $|V|$ & avg $|E|$ & max $|E|$ & $|\Sigma_V|$ & $|\Sigma_E|$ \\
\hline
\aids          & 2000  & 15  & 95   & 16  & 103  & 38 & 3 \\
\muta  & 4339  & 30  & 417  & 30  & 112  & 14 & 3 \\
\protein  & 600  & 32  & 126  & 62  & 149  & 8249 & 17489 \\
\end{tabular}
\end{table}

From each dataset, we selected 10 graphs to serve as query graphs. Their average sizes are as follows: for the \aids dataset, query graphs contain an average of 36 nodes and 38 edges; for \muta, the averages are 29 nodes and 30 edges; for \protein, query graphs have on average 34 nodes and 64 edges.

\noindent\textbf{Computational setting.} The experiments were run on a MacBook M4 Pro with a 12-cores CPU with 48Gb of RAM, macOS Sequoia 15.6.  
We use the implementation of the \bm lower bound from \textit{gedlib}~\cite{BlumenthalBGB19}, we adapted the implementation of \ls provided in~\cite{ChangFYQZ23} and use without change the authors implementation of \bmaoged from the same source. All algorithms are implemented in C++ and compiled using Apple clang version 17, with flag -O3.
We use Gurobi 12.0.3~\cite{gurobi} to solve the ILPs.
Access to our implementation and datasets is provided via the following link \url{https://github.com/D-hash/FORI-SIM}.

\subsection*{\textbf{Answering Q1:} Lower bound comparison}
In this section, we compare the performance of \forilp with state-of-the-art global lower bounds for $GED(G,H)$. We restrict our evaluation here to \muta and \aids using unit edit cost, because the state-of-the-art algorithms for computing lower bounds \bm and \ls are optimized for this case. Note, that \forilp can be straightforwardly extended to an "anchor-aware" version, to compute a lower bound when a subset of the vertices is already mapped, by fixing $x_{i,k} = 1$ if $i\in V_{G+\varepsilon}$ is mapped to $k \in V_{H+\varepsilon}$, as done in \cite{ChangFYQZ23} with for the \bm lower bound. 

Figure~\ref{fig:gaps} shows the mean and maximum gaps in percentage (log-scaled) for the query graphs on the x-axis, computed on the whole dataset w.r.t. the GED values, while Figure~\ref{fig:lb_runtimes} illustrates the corresponding average runtimes in milliseconds per pair $(Q,H)$. The gap of a lower bound algorithm \textsc{alg} on a graph pair $Q,H$ is computed as $\frac{GED(Q,H) - \textsc{alg}(Q,H)}{GED(Q,H)}$. Optimal GEDs were computed using \fori.

As expected from our theoretical discussion, there is a clear hierarchy of the lower bounds both in terms of quality of the bound and computational efficiency. 
\forilp offers the tightest but slowest to compute lower bound, the \ls bound is the fastest to compute but offers the loosest bound, while \bm stands between the two other heuristics both in terms of runtime and quality of the lower bound. 
The lower bound of \forilp offers a tremendous improvement in terms of quality compared to the other algorithms. Its mean gap is an order of magnitude smaller than that of \bm for all but one query graph on \aids and every query graph on \muta, with similar behavior of the max gaps.
In particular, maximum gaps achieved by \forilp are almost always smaller than both the \ls and \bm average gaps. 
However, computing lower bounds with \forilp incurs a non-negligible runtime cost. In contrast, \ls demonstrates exceptional efficiency, achieving average runtimes in the hundreds of microseconds. Similarly, \bm maintains low computational overhead, consistently staying below 10 milliseconds. Meanwhile, \forilp typically operates at an order of magnitude higher, making it the most time-consuming among the three.

Nonetheless, the impressive quality of the \forilp lower bound lets us expect a growing impact on the runtime in the graph similarity search as $\tau$ increases, as the \bm heuristic will filter out less and less graphs for \bmaoged before needing to compute the GED.

\begin{figure}
    \centering
    \begin{subfigure}[b]{\linewidth}
      \centering
        \includegraphics[width=\linewidth]{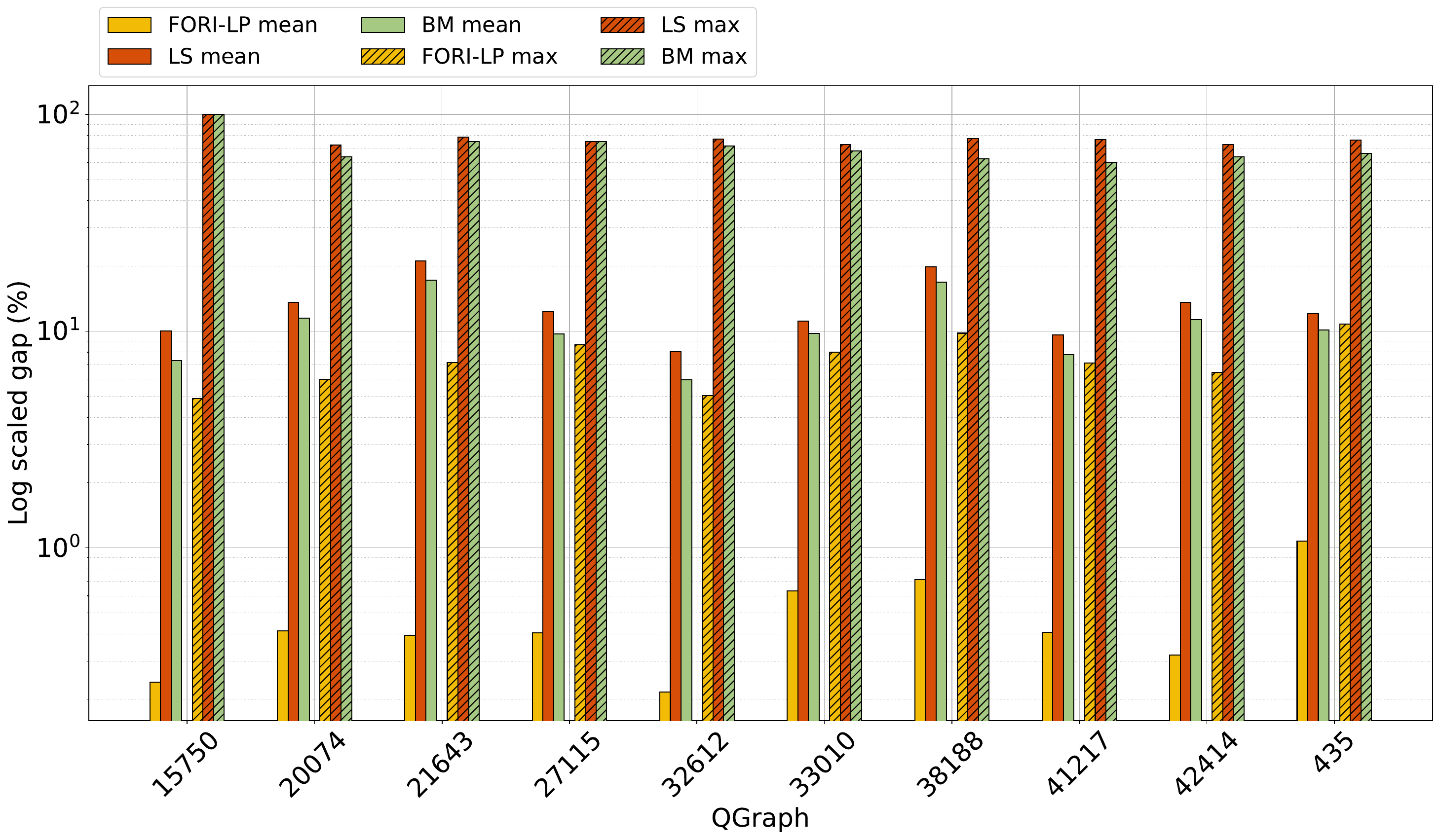}
    \caption{\aids}
    \label{fig:gap_aids}
  \end{subfigure}
  \\  \vspace{2mm}
  \begin{subfigure}[b]{\linewidth}
      \centering
        \includegraphics[width=\linewidth]{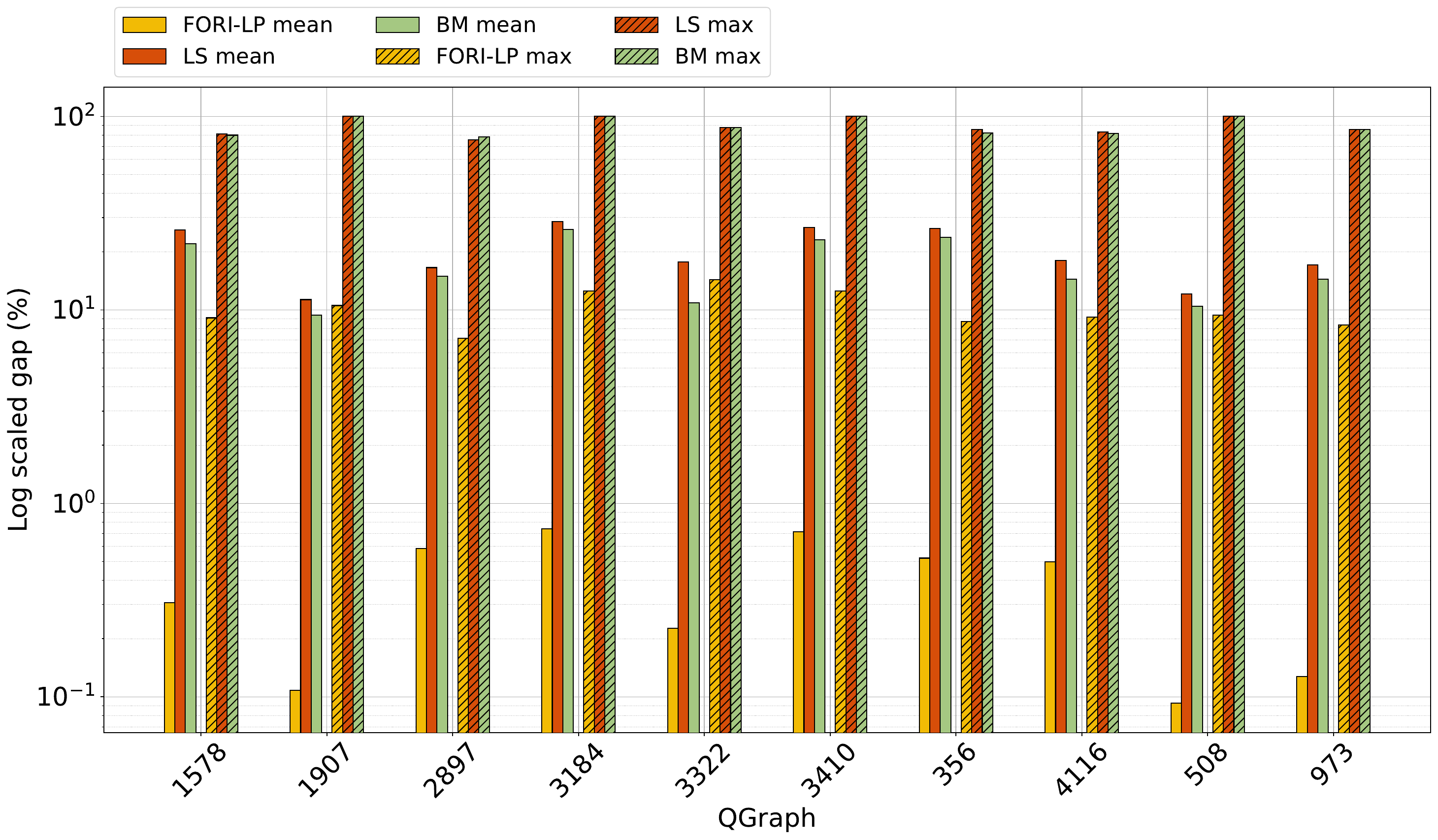}
        \caption{\muta}
        \label{fig:gap_muta}
  \end{subfigure}
  \caption{Lower bound comparison between \ls, \bm, and \forilp grouped by query graphs and divided by datasets. }
  \label{fig:gaps}
\end{figure}

\begin{figure}
    \centering
    \begin{subfigure}[b]{\linewidth}
      \centering
        \includegraphics[width=\linewidth]{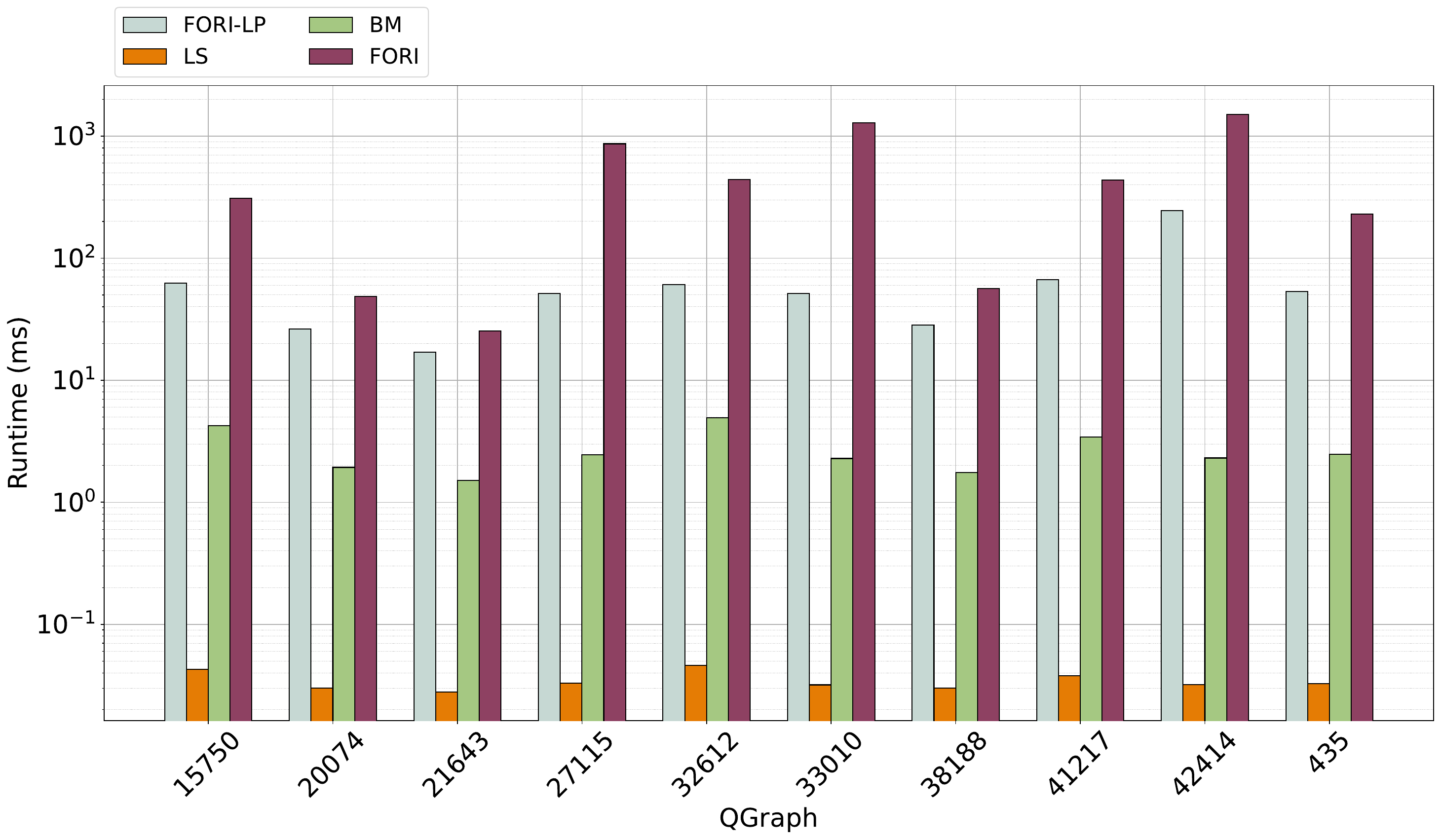}
    \caption{\aids}
    \label{fig:lb_runtimes_aids}
  \end{subfigure}
  \\  \vspace{2mm}
  \begin{subfigure}[b]{\linewidth}
      \centering
        \includegraphics[width=\linewidth]{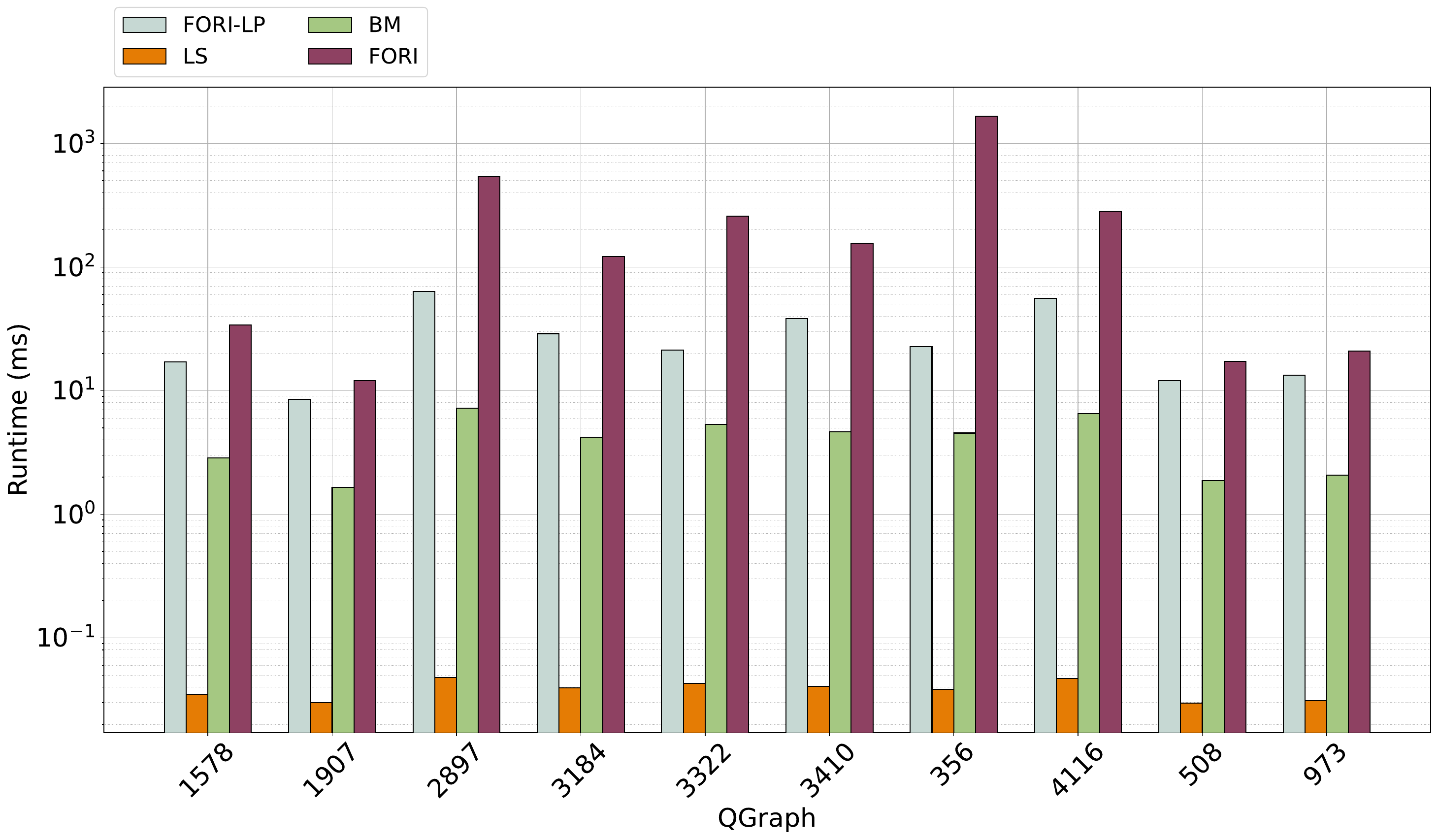}
    \caption{\muta}
    \label{fig:lb_runtimes_muta}
  \end{subfigure}
  \caption{Average runtime comparison between \ls, \bm, \forilp, and \fori grouped by query graphs and divided by datasets.}
  \label{fig:lb_runtimes}
\end{figure}

\subsection*{\textbf{Answering Q2:} Comparison between \forisim and \bmaoged}
In this section, we evaluate the performance of \forisim "in" graph similarity search, comparing it against \bmaoged, the current state-of-the-art approach~\cite{ChangFYQZ23}.
For this comparison, we restrict to unit edit costs, as \bmaoged does not support non-uniform edit costs.
As a consequence, we use only the \aids and \muta datasets for the comparison, as their original cost function can be naturally translated into the unit cost model (see, e.g., \cite{BlumenthalBGBB20,DAscenzoMMR25}).

Concerning the similarity threshold, we examine the parameter $\tau$ over the set $\{1,5\} \cup [10,20] \cup \{30,40,50\}$.
As illustrated in Figures~\ref{fig:fraction_aids}--\ref{fig:fraction_muta}, even high threshold values such as 50 lead to a substantial portion of dataset graphs being filtered out. For instance, in the \aids dataset, fewer than 10\% of the graphs have a GED of 50 or less with respect to query graphs 32612, 15750, and 41217. Similarly, for the \muta dataset, query graph 2897 yields a GED below 50 with fewer than 30\% of the graphs in the dataset.
To the best of our knowledge, this is the first study to explore threshold values exceeding 16~\cite{GoudaH16,KimC019,ChangF0QZO20,ChangFYQZ23}. 

\begin{figure}
    \centering
    \begin{subfigure}[b]{.8\linewidth}
      \centering
        \includegraphics[width=\linewidth]{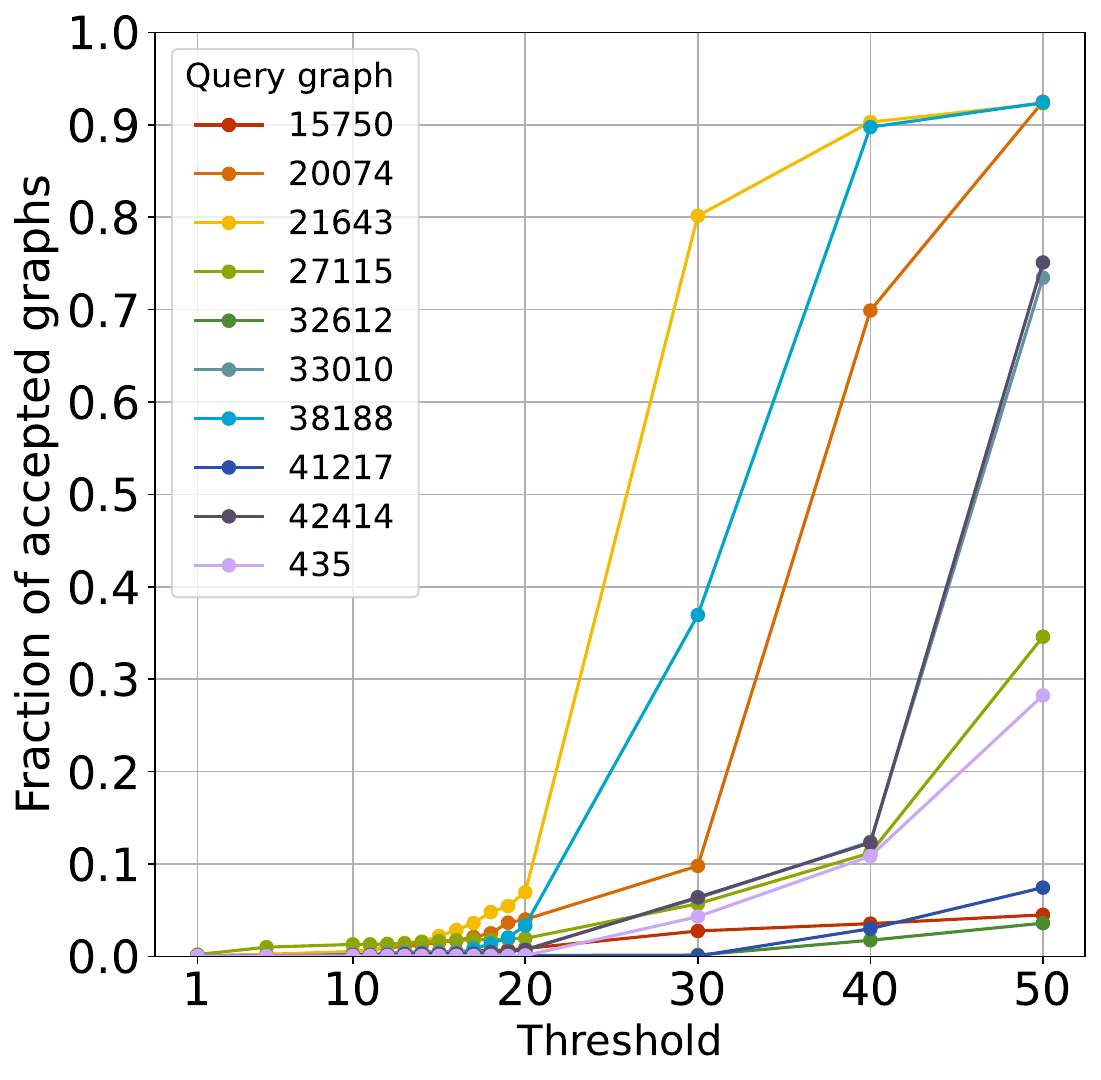}
        \caption{\aids}
        \label{fig:fraction_aids}
  \end{subfigure}
  \\  \vspace{2mm}
  \begin{subfigure}[b]{.8\linewidth}
      \centering
        \includegraphics[width=\linewidth]{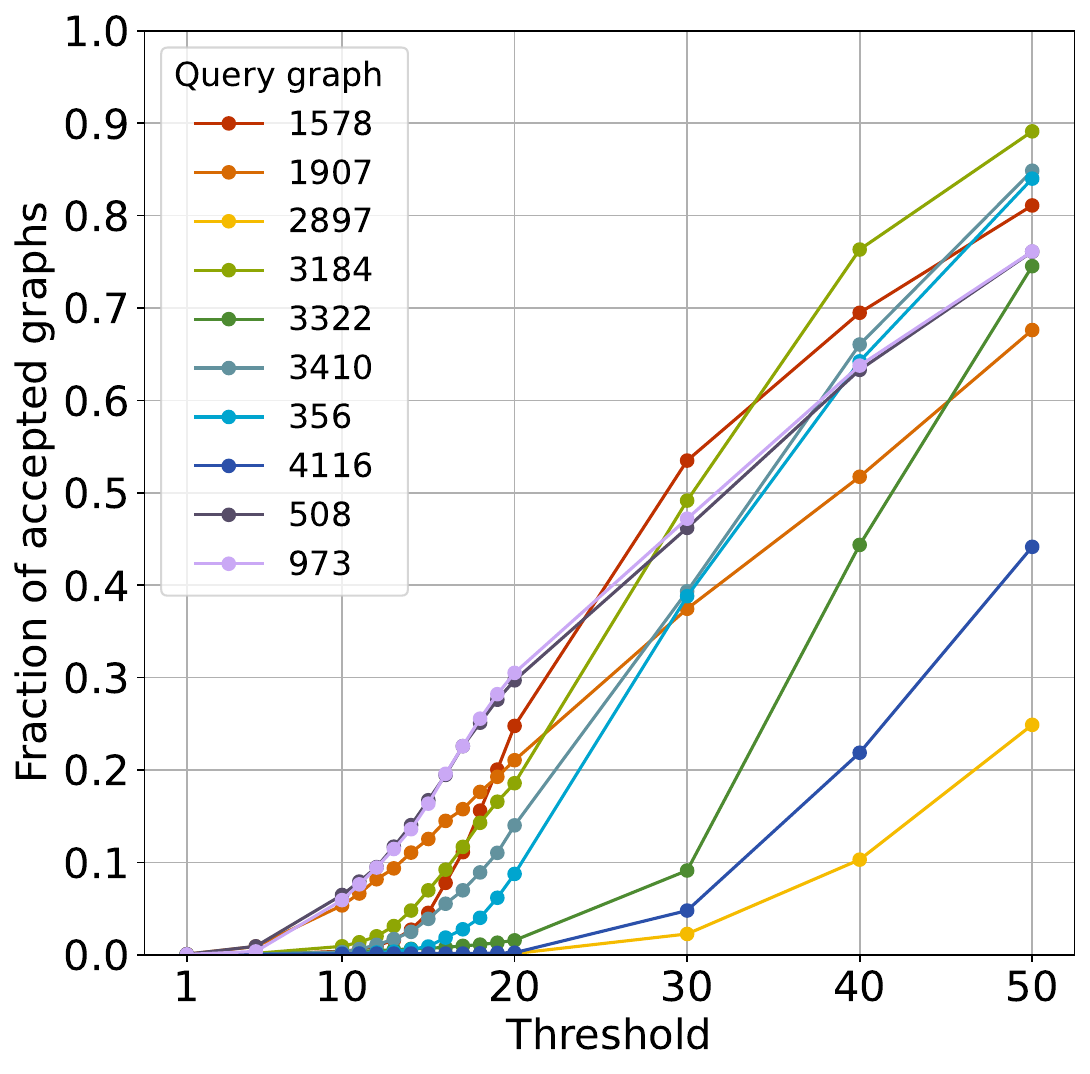}
        \caption{\muta}
        \label{fig:fraction_muta}
  \end{subfigure}
  \caption{Fraction of average accepted graphs over all query graphs on unit edit costs.}
  \label{fig:fraction}
\end{figure}

Table~\ref{tab:aggr_gss} presents the statistics computed over the query graphs for each dataset, grouped by threshold~$\tau$. A time limit of one hour is set for each query graph. When aggregating results by $\tau$, we use \tlr to indicate that at least one query graph instance exceeded the time limit. The column \textit{Matches} reports the average number of dataset graphs accepted. Column \textbf{T} shows the average runtime per query graph, while column \textbf{M} indicates the maximum memory usage, i.e., the "maximum resident set size
of the process during its lifetime", as measured by the
GNU command \textit{time}\footnote{\url{https://man7.org/linux/man-pages/man1/time.1.html}}. Lastly, column \textbf{C} represents the percentage of dataset graphs in $\mathcal{D}$ evaluated by each algorithm within the time limit.
            
\begin{table}[ht]
    \centering
    \caption{Results statistics for graph similarity search with unit edit costs on 10 query graphs.}
    \label{tab:aggr_gss}
        \setlength{\tabcolsep}{4pt}

    \begin{subtable}[t]{\columnwidth}
        \centering
        \caption{\aids}
        \renewcommand{\arraystretch}{1.1}
        \footnotesize
        \begin{tabular}{cc|ccc|ccc}
            \hline
            \multirow{2}{*}{$\tau$} & \multirow{2}{*}{Matches} 
            & \multicolumn{3}{c|}{\textbf{\forisim}} 
            & \multicolumn{3}{c}{\textbf{\bmaoged}} \\
            & & \textbf{T (s)} & \textbf{M} & \textbf{C (\%)} 
              & \textbf{T (s)} & \textbf{M} & \textbf{C (\%)} \\
            \hline
            1   & 1.3 ($<$0.1\%)    & 0.14     & 261M   & 100   & $<$0.01      & 5M    & 100   \\
            5   & 3.5 (0.2\%)   & 0.48     & 373M   & 100   & 0.15    & 6M    & 100   \\
            10  & 6.0 (0.3\%)   & 1.61     & 416M   & 100   & 60.67  & 379M   & 100   \\
            15 & 13.9 (0.7\%) & 4.55 & 567M & 100 &  \tlr   & 1.5G & 95.9 \\
            20  & 37.6 (1.9\%)   & 11.70     & 657M   & 100   & \tlr      & 8.1G  & 67.8  \\
            30  & 305.1 (15.3\%)  & 37.67    & 851M   & 100   & \tlr      & 15.2G  & 27.5  \\
            40  & 610.0 (30.5\%) & 61.35    & 1G     & 100   & \tlr      & 10.7G  & 18.9  \\
            50  & 1008.3 (50.4\%) & 98.32    & 1.5G   & 100   & \tlr      & 10.7G  & 13.4  \\
            \hline
        \end{tabular}
    \end{subtable}
    \\  \vspace{2mm}
    \begin{subtable}[t]{\columnwidth}
        \centering
        \caption{\muta}
        \renewcommand{\arraystretch}{1.1}
        \footnotesize
        \begin{tabular}{cc|ccc|ccc}
            \hline
            \multirow{2}{*}{$\tau$} & \multirow{2}{*}{Matches} 
            & \multicolumn{3}{c|}{\textbf{\forisim}} 
            & \multicolumn{3}{c}{\textbf{\bmaoged}} \\
            & & \textbf{T (s)} & \textbf{M} & \textbf{C (\%)} 
              & \textbf{T (s)} & \textbf{M} & \textbf{C (\%)} \\
            \hline
            1   & 1.6  ($<$ 0.1\%)  & 0.10     & 315M   & 100   & $<$0.01      & 7M    & 100   \\
            5   & 11.5  (0.3\%)  & 1.56     & 412M   & 100   & 0.03    & 8M    & 100   \\
            10  & 86.5 (2.0\%)    & 13.89     & 511M   & 100   & 4.97  & 13M   & 100   \\
            15  & 272.9 (6.3\%) & 30.10     & 665M      & 100  & 471.70 & 645M & 100 \\
            20  & 648.0 (14.9\%)  & 63.88     & 797M   & 100   & \tlr      & 3.6G  &  47.9 \\
            30  & 1422.5 (32.8\%) & 157.65    & 1.1G   & 100   & \tlr      & 19.4G  & 40.3  \\
            40  & 2306.0 (53.1\%)  & 257.27    & 1.4G     & 100   & \tlr      & 13.1G  & 35.7  \\
            50  & 3048.2 (70.2\%) & 296.20    & 1.7G   & 100   & \tlr      & 4.2G  & 13.7G  \\
            \hline
        \end{tabular}
    \end{subtable}
\end{table}

\begin{figure}
    \centering
    \begin{subfigure}[b]{.8\linewidth}
      \centering
        \includegraphics[width=\linewidth]{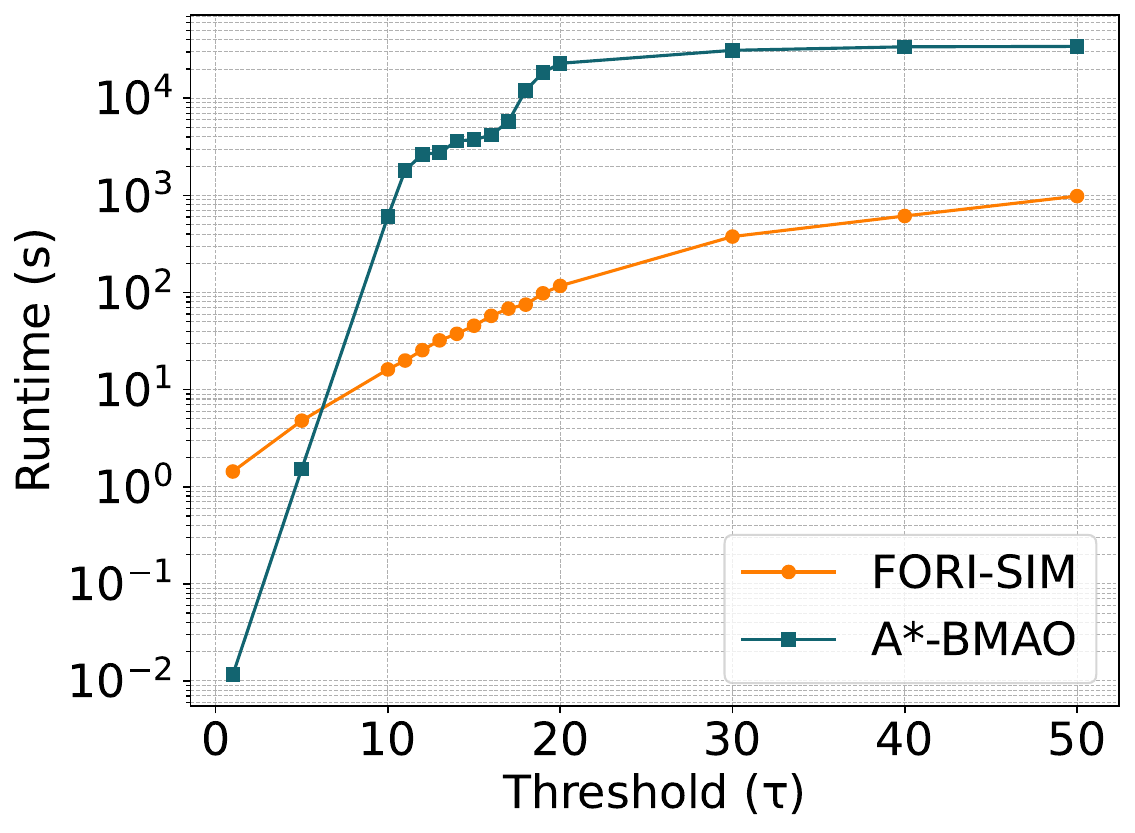}
        \caption{\aids}
        \label{fig:lineplot_aids}
  \end{subfigure}
  \\  \vspace{2mm}
  \begin{subfigure}[b]{.8\linewidth}
      \centering
        \includegraphics[width=\linewidth]{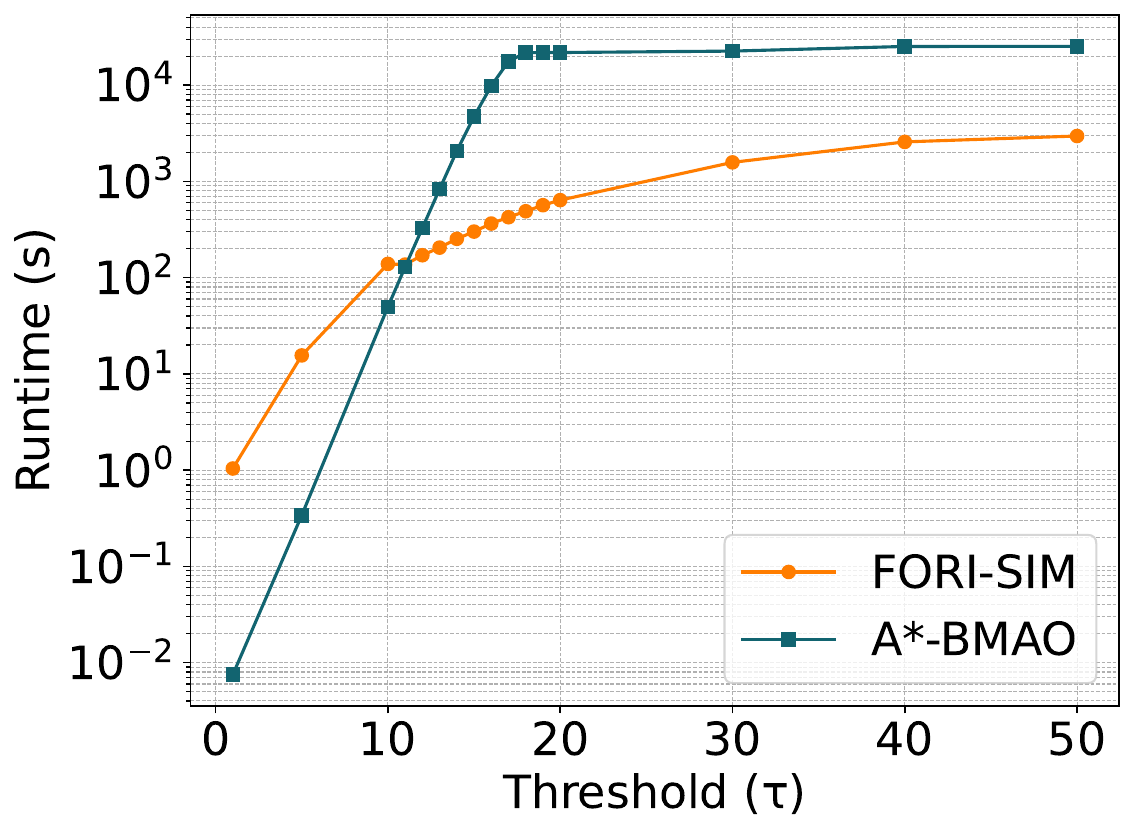}
        \caption{\muta}
        \label{fig:lineplot_muta}
  \end{subfigure}
  \caption{Line plots of runtime performance of \forisim and \bmaoged on the graph similarity search problem as a function of $\tau$.}
  \label{fig:lineplots}
\end{figure}

\noindent\textbf{\aids dataset.}
Both algorithms perform efficiently at low thresholds. For $\tau = 1$ and $\tau = 5$, \bmaoged is notably faster and more memory-efficient than \forisim, completing queries in under 0.15 seconds and using less than 10MB of memory. However, this advantage diminishes rapidly as the threshold increases. At $\tau = 10$, \bmaoged's runtime exceeds 60 seconds, while \forisim remains stable at just 1.61 seconds. Beyond this point, \bmaoged consistently fails to complete queries within the time limit, with \tlr appearing from $\tau = 15$ onward. This failure is accompanied by a steep rise in memory consumption, reaching 15.2GB at $\tau = 30$, and a dramatic drop in coverage from 95.9\% at $\tau = 15$ to just 13.4\% at $\tau = 50$. The one-hour time limit is further justified by these performance trends, as allowing \bmaoged to run to completion would likely require several hours and a potentially larger memory footprint.
In contrast, \forisim maintains full coverage and completes all queries within the time limit, even at the highest thresholds. Its memory usage increases gradually, peaking at 1.5GB, and its runtime remains manageable, staying under 100 seconds at $\tau = 50$. These results suggest that while \bmaoged is highly efficient for small similarity thresholds, it lacks the scalability required for larger thresholds, whereas \forisim offers consistent and reliable performance.

\noindent\textbf{\muta dataset.}
At low thresholds, \bmaoged again shows impressive speed and minimal memory usage, completing queries in milliseconds. However, its performance deteriorates fast. By $\tau = 15$, its runtime raises to nearly 472 seconds, and from $\tau = 20$ onward, it fails to complete queries within the time limit. Memory usage also becomes a critical bottleneck, peaking at 19.4GB at $\tau = 30$. Coverage drops fast, falling below 50\% at $\tau = 20$ and reaching just 35.4\% at $\tau = 50$. \forisim, on the other hand, demonstrates remarkable scalability. It completes all queries within the time limit across all thresholds, with runtime increasing monotonically from 0.10 seconds at $\tau = 1$ to 296.20 seconds at $\tau = 50$. Memory usage remains within reasonable bounds, peaking at 1.7GB, and coverage stays consistently at 100\%. The number of matches also grows steadily with increasing $\tau$ in both datasets, reflecting the relaxed similarity constraints.

Figure~\ref{fig:lineplots} compares the runtime performance of \forisim and \bmaoged as a function of the threshold parameter $\tau$.
The line plots report the total runtime (in seconds, log-scale) over all the 10 query graphs, grouped by thresholds.
As the threshold increases, the total runtime for \forisim (depicted by the orange curve) exhibits a smooth and monotonic growth. This trend appears nearly linear when viewed in log-scale, starting from fractions of a second at low thresholds and reaching approximately $10^3$ seconds at $\tau = 50$. The consistency of this growth suggests that our method scales predictably and remains computationally feasible even at high similarity tolerances.

In contrast, \bmaoged (shown in blue) displays a different behavior. Its runtime rises sharply, with an exponential increase up to thresholds around $\tau = 15$–$20$. Beyond this point, the curve flattens near $10^4$ seconds. This saturation indicates that \bmaoged reaches the time limit and struggles to process queries efficiently at higher thresholds. Although \bmaoged performs slightly faster than \forisim for $\tau \leq 10$, the gap reverses dramatically as $\tau$ increases, with the latter becoming significantly more efficient.
These runtime trends underscore the scalability advantages of our routine. While both algorithms experience increased computational demand with rising thresholds, \forisim maintains a controlled growth, whereas \bmaoged suffers from combinatorial explosion of the search space.

\subsection*{\textbf{Answering Q3:}
\begin{table}[ht]
    \centering
    \caption{Results statistics for graph similarity search with non-uniform edit costs on 10 query graphs.}
    \label{tab:aggr_gss_nonuni}
    \footnotesize
    \begin{subtable}[t]{\columnwidth}
        \centering
        \caption{\aids}
        \label{tab:aggr_gss_nonuni_aids}
        \renewcommand{\arraystretch}{1.2}
        \begin{tabular}{cc|ccc}
            \hline
            \multirow{2}{*}{$\tau \times 3.575$} & \multirow{2}{*}{Matches} 
            & \multicolumn{3}{c}{\textbf{\forisim}}\\
            & & \textbf{T (s)} & \textbf{M} & \textbf{C (\%)}\\
            \hline
            1   & 1.1 ($<$ 0.1\%)    & 0.25     & 308M   & 100   \\
            5   & 6.8  (0.3\%) & 4.71     & 515M   & 100   \\
            10  & 49.5 (2.4\%)   & 20.24     & 851M   & 100   \\
            15 & 329.0 (16.4\%) & 46.17 & 1.0G & 100 \\
            20  & 581.6 (29.0\%)   & 55.59     & 1.1G   & 100  \\
            30  & 1319.8 (66.0\%)  & 77.98    & 1.4G   & 100   \\
            40  & 1563.3 (78.2\%)  & 74.33    & 1.4G     & 100   \\
            50  & 1746.7 (87.3\%) & 68.19    & 1.2G   & 100   \\
            \hline
        \end{tabular}
    \end{subtable}
    \\  \vspace{2mm}
    \begin{subtable}[t]{\columnwidth}
        \centering
        \caption{\muta}
        \label{tab:aggr_gss_nonuni_muta}
        \renewcommand{\arraystretch}{1.2}
        \begin{tabular}{cc|ccc}
            \hline
            \multirow{2}{*}{$\tau \times 3.575$} & \multirow{2}{*}{Matches} 
            & \multicolumn{3}{c}{\textbf{\forisim}} \\
            & & \textbf{T (s)} & \textbf{M} & \textbf{C (\%)}\\
            \hline
            1   &   1.7 ($<$ 0.1\%) &   0.43    &   367M    &   100 \\
            5   &   39.7 (0.9\%)  &   24.94   &   654M    &   100 \\
            10  &   471.8 (10.9\%)  &   86.20  & 1.0G  &   100\\
            15  &   1273.8 (29.4\%)  &   163.23  &   1.5G & 100 \\
            20  &   2119.3 (48.8\%)  &   220.06  &   1.7G & 100 \\
            30  &   3447.3 (79.4\%) &   230.67  &   1.9G & 100 \\
            40  &   3997.9 (92.1\%) &   200.65  &   1.9G & 100  \\
            50 &    4193.6 (96.6\%) &   187.00  &   2.2G & 100\\
            \hline
        \end{tabular}
    \end{subtable}
    \\  \vspace{2mm}
    \begin{subtable}[t]{\columnwidth}
        \centering
        \caption{\protein}
        \label{tab:aggr_gss_nonuni_prot}
        \renewcommand{\arraystretch}{1.2}
        \begin{tabular}{cc|ccc}
            \hline
            \multirow{2}{*}{$\tau \times 8.375$} & \multirow{2}{*}{Matches} 
            & \multicolumn{3}{c}{\textbf{\forisim}} \\
            & & \textbf{T (s)} & \textbf{M} & \textbf{C (\%)}\\
            \hline
            1 & 1.1 (0.1\%) & 63.83 & 4.1G & 100 \\
            5 & 1.4 (0.2 \%) & 65.25 & 3.8G & 100 \\
            10 & 2.1 (0.3\%) & 64.69 & 3.7G & 100 \\
            15 & 2.7 (0.4 \%) & 65.90 & 3.4G & 100 \\
            20 & 23.5 (3.9\%) & 66.35 & 5.0G & 100 \\
            30 & 138.4 (23.0\%) & 69.79 & 4.5G & 100 \\
            40 & 353.2 (58.9\%) & 75.69 & 5.2G & 100 \\
            50 & 485.3 (81.0\%)& 77.16 & 4.6G & 100 \\
            \hline
        \end{tabular}
    \end{subtable}
\end{table}GED verification on non-uniform edit costs}
In this section we report on a computational experience for graph similarity search under non-uniform edit costs.
We use the cost functions described in~\cite{BlumenthalBGBB20}.
For \aids and \muta datasets node edit costs are defined as $c_V(\alpha,\alpha') := 5.5 \cdot \delta_{\alpha \ne \alpha'}$, $c_V(\alpha,\epsilon) := 2.75$, and $c_V(\epsilon, \alpha'):=2.75$, for all $(\alpha,\alpha') \in \Sigma_V\times\Sigma_V$. Edge edit costs are defined as $c_E(\beta,\beta'):= 1.65 \cdot \delta_{\beta,\beta'}$, $c_E(\beta,\epsilon):=0.825$, and $c_E(\epsilon, \beta'):=0.825$, for all $(\beta, \beta') \in \Sigma_E \times \Sigma_E$.
Turning to \protein dataset, node edit costs are defined as: $c_V(\alpha, \alpha') := 16.5 \cdot \delta_{\alpha.t \neq \alpha'.t} + 0.75 \cdot \delta_{\alpha.t = \alpha'.t} \cdot \text{LD}(\alpha.s, \alpha'.s)$, $c_V(\alpha, \varepsilon) := 8.25$, and $c_V(\varepsilon, \alpha') := 8.25$, for all $(\alpha, \alpha') \in \Sigma_V \times \Sigma_V$, where $\text{LD}(\cdot, \cdot)$ denotes the Levenshtein string edit distance.
Edge edit costs are defined as: $c_E(\beta, \beta') := 0.25 \cdot \text{LSAPE}(C_{\beta, \beta'})$, $c_E(\beta, \varepsilon) := 0.25 \cdot f(\beta)$, and $c_E(\varepsilon, \beta') := 0.25 \cdot f(\beta')$, for all $(\beta, \beta') \in \Sigma_E \times \Sigma_E$, where $f(\beta) := 1 + \delta_{\beta.t \neq \texttt{null}},$
and $C_{\beta, \beta'} \in \mathbb{R}^{(f(\beta)+1) \times (f(\beta')+1)}$ is constructed as $c^{\beta, \beta'}_{r,s} := 2 \cdot \delta_{\beta.t_r \neq \beta'.t_s}$, $c^{\beta, \beta'}_{r, f(\beta')+1} := 1$, $c^{\beta, \beta'}_{f(\beta)+1, s} := 1$,
for all $(r, s) \in [f(\beta)] \times [f(\beta')]$, and $\text{LSAPE}(C_{\beta, \beta'})$ is the cost of an optimal solution to the \textit{linear sum assignment problem with error-correction}~\cite{BlumenthalBGBB20}.

Concerning the similarity threshold, we examine the parameter $\tau$ over the values in the set $\{1,5\} \cup [10,20] \cup \{30,40,50\}$ multiplied by a dataset-specific constant. For \aids and \muta such constant is equal to 3.575, i.e., the average cost of substituting a node and an edge. For \protein it is computed similarly and is equal to 8.375.
As illustrated in Figures~\ref{fig:fraction_aids_nonuni}--\ref{fig:fraction_prot_nonuni}, the chosen thresholds allow for a comprehensive evaluation of \forisim performances, from cases where most graphs are rejected at low thresholds, to near-complete acceptance at high thresholds, as well as the gradual transitions in between, as seen in Figure~\ref{fig:fraction} for unit edit costs.

In both analysis we will refer to the threshold values without the constant multiplicative factor for the ease of readability, as reported in Tables~\ref{tab:aggr_gss_nonuni_aids}--~\ref{tab:aggr_gss_nonuni_prot}.

Since the \ls heuristic works only for unit edit costs, it is not possible to use it in the non-uniform costs setting of this section.
On \protein, due to the definition of its edit cost function, the runtime of the \bm lower bound is $O((|V_G|+|V_H|)^3)$ which is close to the one of \forilp. We therefore chose to skip its computation in \forisim on this dataset.
For this reason, we analyze the \aids and \muta datasets jointly, whereas the \protein dataset is discussed independently.

\begin{figure}
    \centering
    \begin{subfigure}[b]{0.7\linewidth}
      \centering
        \includegraphics[width=\linewidth]{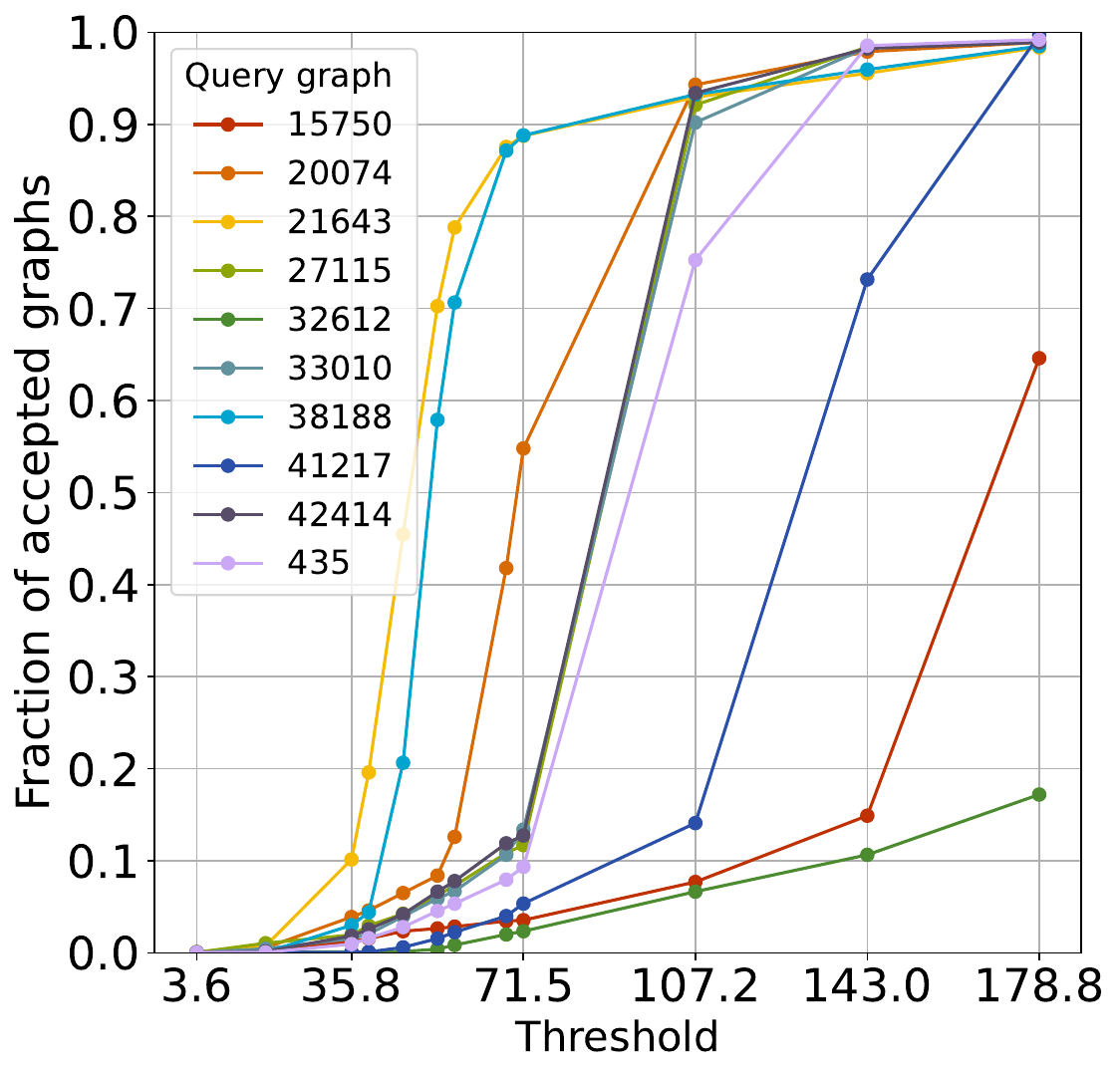}
        \caption{\aids}
        \label{fig:fraction_aids_nonuni}
  \end{subfigure}
  \\ \vspace{2mm}
  \begin{subfigure}[b]{0.7\linewidth}
      \centering
        \includegraphics[width=\linewidth]{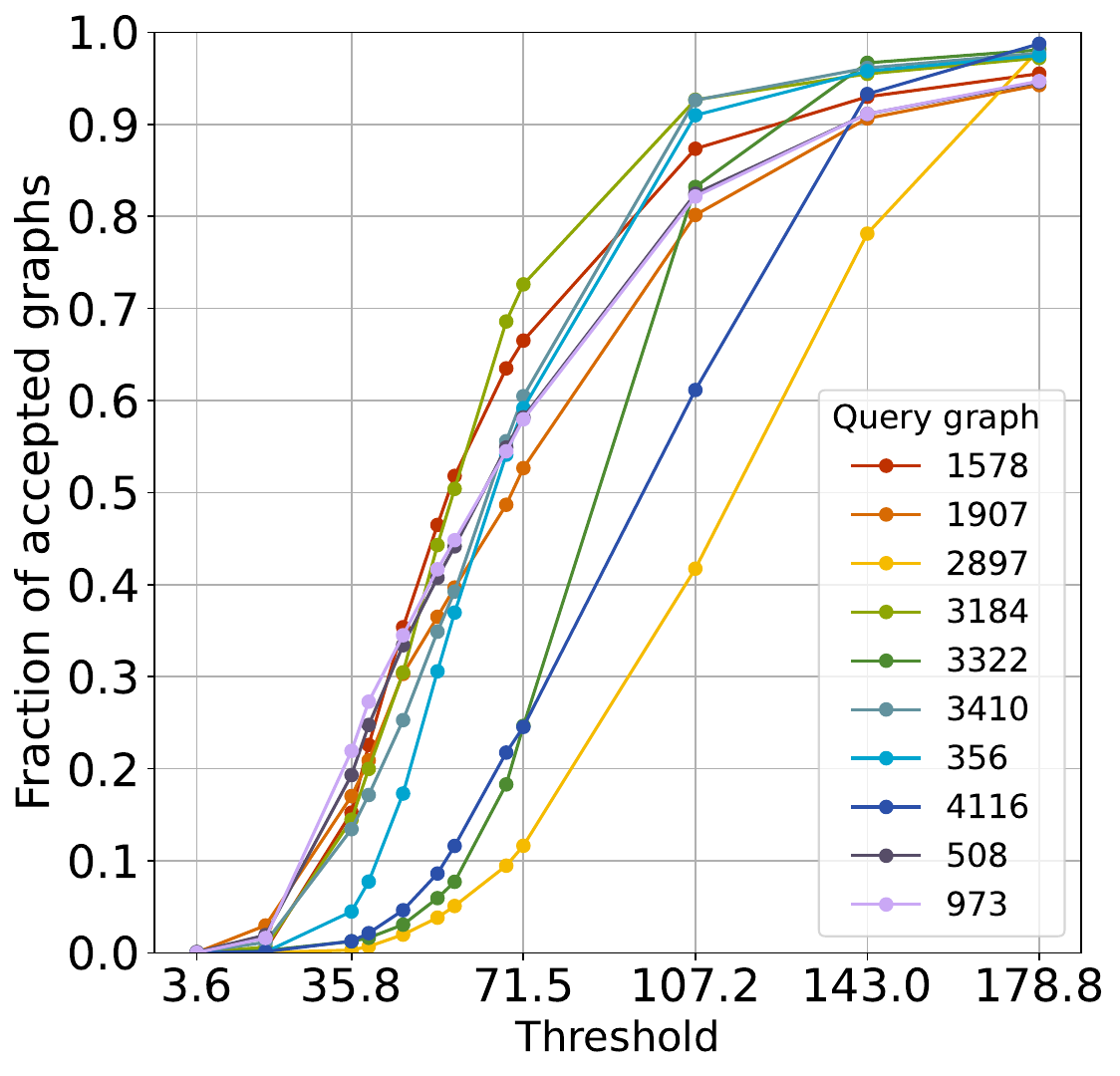}
        \caption{\muta}
        \label{fig:fraction_muta_nonuni}
  \end{subfigure}
  \\ \vspace{2mm}
  \begin{subfigure}[b]{0.7\linewidth}
      \centering
        \includegraphics[width=\linewidth]{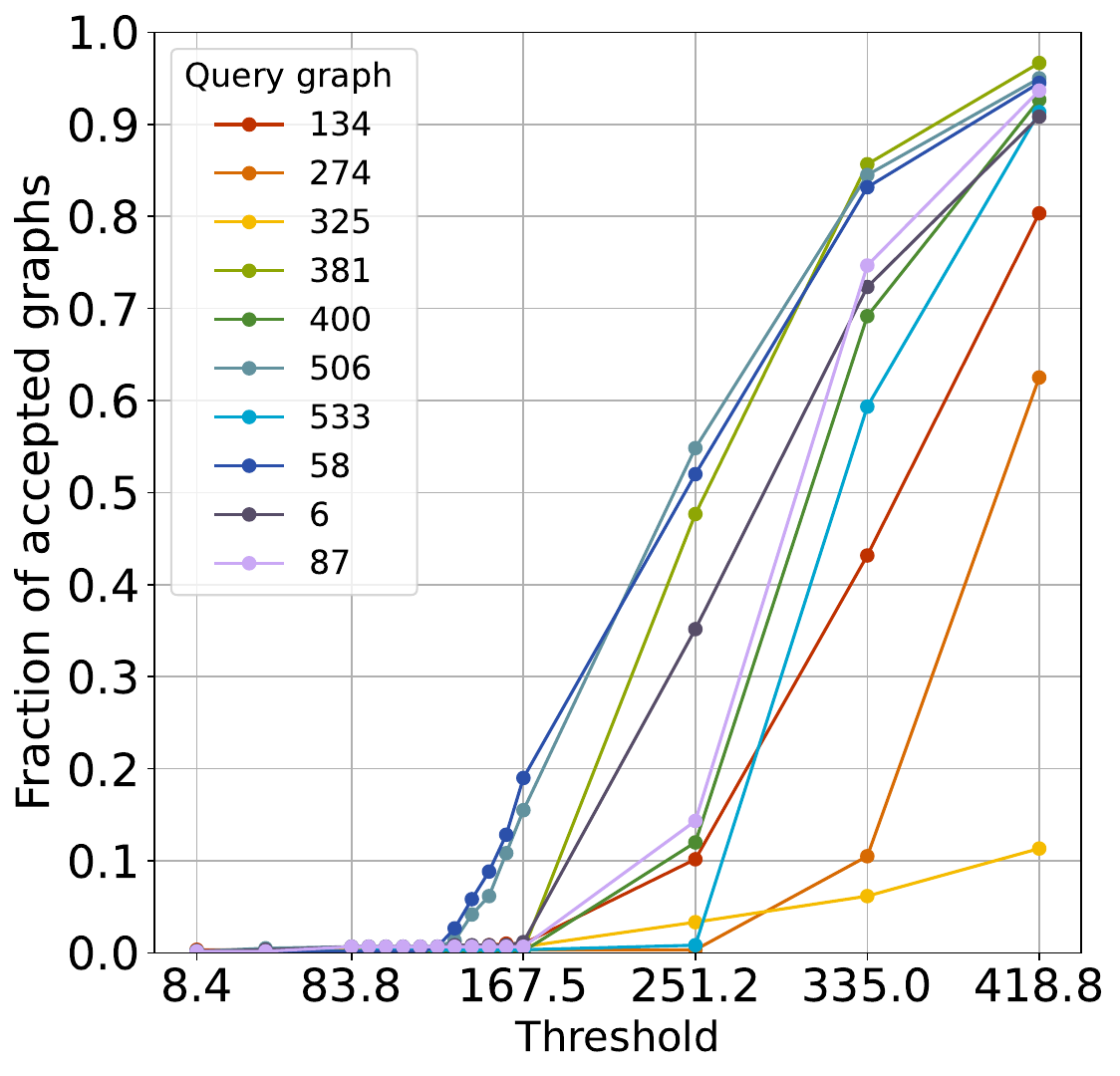}
        \caption{\protein}
        \label{fig:fraction_prot_nonuni}
  \end{subfigure}
  \caption{Fraction of average accepted graphs over all query graphs with non-uniform edit costs.}
  \label{fig:fraction_nonuni}
\end{figure}

\noindent\textbf{\aids and \muta datasets.} As expected, the average number of accepted graphs grows steadily with the threshold, reaching up to 87.3\% and 96.6\% of the dataset for \aids and \muta respectively at $\tau = 50$. Runtime increases accordingly, with \aids peaking at 77.98 seconds and \muta at 230.67 seconds, both reached at $\tau=30$. At higher thresholds, a minor drop is observed, likely attributable to the solver's reduced computational effort in finding feasible solutions (see also the aggregated runtime trends in Figures~\ref{fig:lineplot_aids_nonuni})--~\ref{fig:lineplot_muta_nonuni}). Memory usage scales with threshold, reaching a maximum of 2.2G for \muta and 1.4G for \aids, indicating a heavier computational load on the former. As in the unit edit cost case, the coverage remains consistently at 100\% across all thresholds, confirming that \forisim successfully evaluates all dataset graphs within the time limit. These results demonstrate the scalability and robustness of \forisim, even under non-uniform edit cost functions.

\noindent\textbf{Protein dataset.} 
The results on the \protein dataset reveal a distinct runtime profile for \forisim, primarily due to the inapplicability of the \ls and \bm lower bounds to its edit cost function.
As a consequence, all instances are evaluated using our \forilp and \forithr only, resulting in a relatively stable runtime across the entire range of thresholds (cf. Figure~\ref{fig:lineplot_prot_nonuni}).
In detail, from $\tau=1$ to $\tau = 20$, the average runtime per query graph remains clustered around 64-66 seconds, despite a gradual increase in the number of accepted graphs.
This stability reflects the uniform computational effort required by \forisim in the absence of fast lower bounds computation.
Beyond $\tau=20$, the number of accepted graphs increases significantly, reaching 81.0\% at $\tau=50$, and the runtime begins to rise accordingly, peaking at 77.16 seconds.
Memory usage ranges between 3.4G and 5.2G, with no clear trend, suggesting that memory demands are influenced more by the query graph structure than the threshold.
Coverage remains consistently at 100\% across all thresholds, confirming that \forisim is able to consider all dataset graphs within the time limit.
Overall, results on \protein dataset highlight the robustness of \forisim, and its performance stability even when fast lower bound algorithms are not available.

\begin{figure}
    \centering
    \begin{subfigure}[b]{.8\linewidth}
      \centering
        \includegraphics[width=\linewidth]{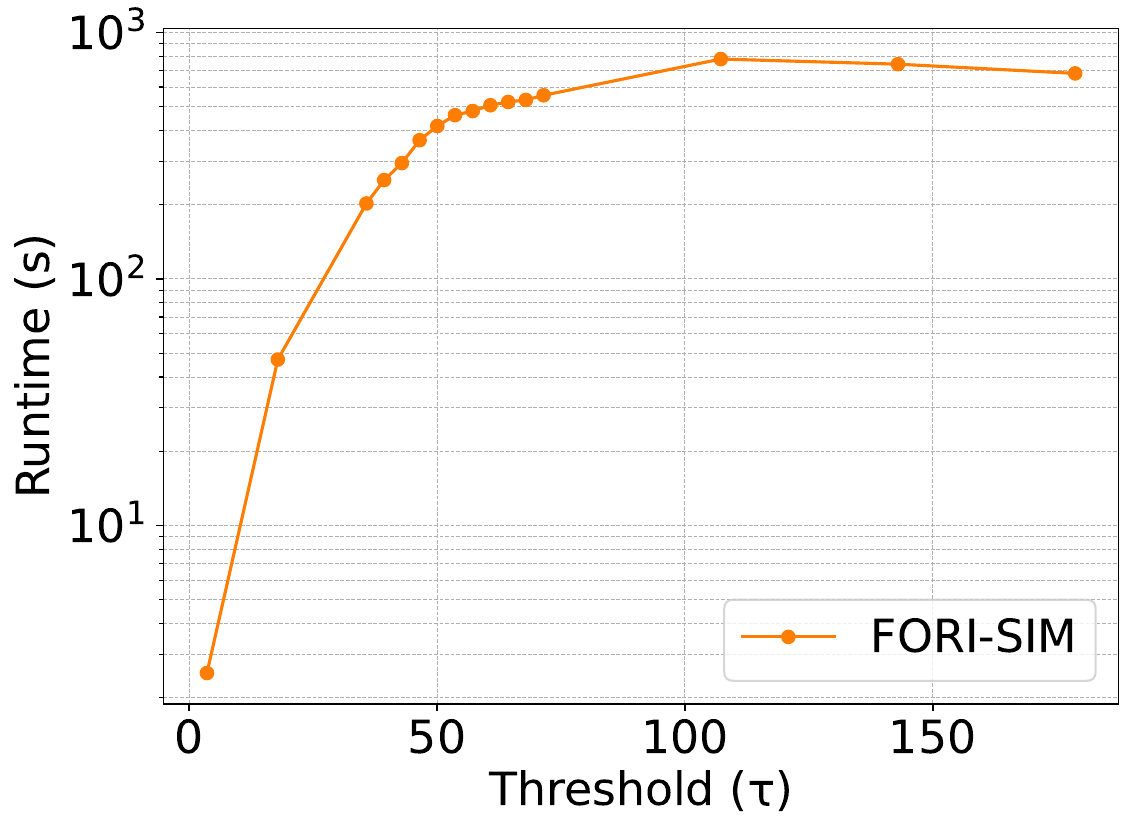}
        \caption{\aids}
        \label{fig:lineplot_aids_nonuni}
  \end{subfigure}
  \\
  \begin{subfigure}[b]{.8\linewidth}
      \centering
        \includegraphics[width=\linewidth]{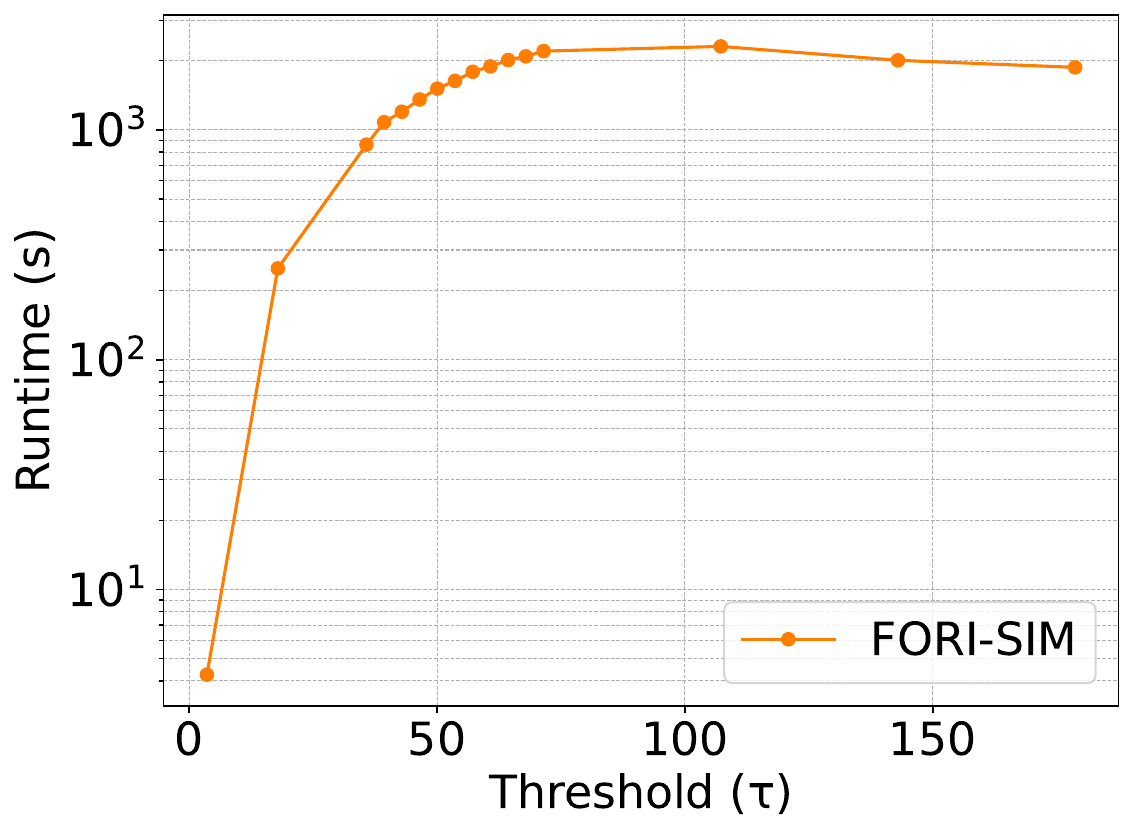}
        \caption{\muta}
        \label{fig:lineplot_muta_nonuni}
  \end{subfigure}
  \\
  \begin{subfigure}[b]{.8\linewidth}
      \centering
        \includegraphics[width=\linewidth]{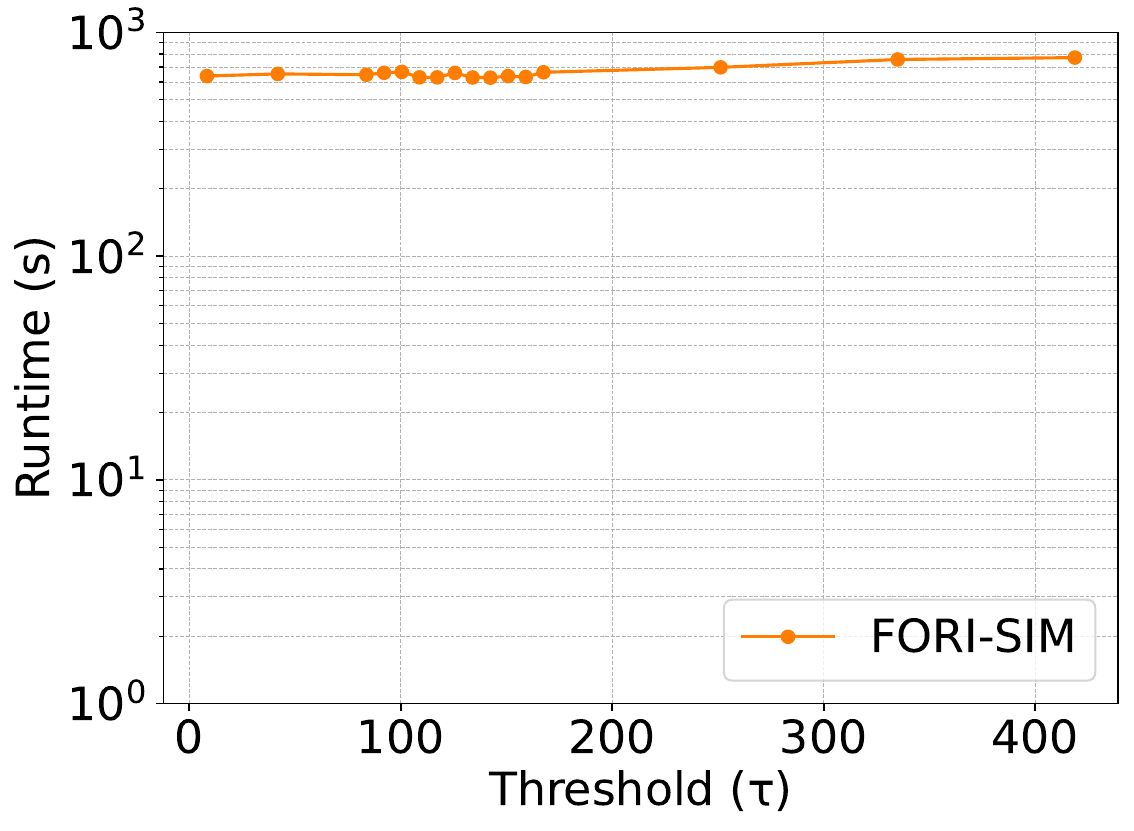}
        \caption{\protein}
        \label{fig:lineplot_prot_nonuni}
  \end{subfigure}
  \caption{Line plots of aggregated runtime performance of \forisim on the graph similarity search problem on non-uniform costs as a function of $\tau$.}
  \label{fig:lineplots_nonuni}
\end{figure}

\section{Conclusions}
In this paper, we propose \forisim, an algorithm to solve the graph similarity search under both uniform and non-uniform edit cost functions. This is the first to use integer programming techniques and employs a hierarchy of lower bounds including our novel lower bound \forilp. We theoretically establish its improvement over state-of-the-art lower bounds. Extensive experimental evaluation underlines the impressive quality of the lower bound, which is achieved with a reasonable runtime trade off. Furthermore, we show that our novel graph similarity search algorithm \forisim outperforms the state-of-the-art algorithm \bmaoged on all except the smallest of thresholds both in terms of runtime and memory consumption highlighting its superior scalability.

Possibilities for future work include comparing \forilp and \bm when considering partial mappings, where a subset of vertices is already fixed. Further, it is possible to incorporate algorithms that provide upper bounds into \forisim to possibly accept graphs earlier, or devise tighter and more efficient lower bounds. 

\bibliographystyle{IEEEtran}
\bibliography{merged_bibliography}
\vspace{12pt}

\end{document}